\newtheorem{lemma}{\textbf{Lemma}}[section]
\newtheorem{theorem}{\textbf{Theorem}}
\newtheorem{remark}{\textbf{Remark}}[section]
\newtheorem{corollary}{\textbf{Corollary}}[section]
\newtheorem{example}{\textbf{Example}}[section]
\newtheorem{Definition}{\textbf{Definition}}
\newcommand{\F}{\mathbb{F}}
\begin{document}

\baselineskip 17pt\title{\Large\bf On Euclidean Hulls of MDS Codes}

\author{\large  Xiaolei Fang \quad\quad Meiqing Liu \quad\quad Jinquan Luo*}\footnotetext{The authors are with School of Mathematics
and Statistics \& Hubei Key Laboratory of Mathematical Sciences, Central China Normal University, Wuhan China, 430079. \\ E-mail: fangxiaolei@mails.ccnu.edu.cn(X.Fang), 15732155720@163.com(M.Liu), luojinquan@mail.ccnu.edu.cn(J.Luo)}

\date{}
\maketitle

{\bf Abstract}: In this paper, we propose a mechanism on the constructions of MDS codes with arbitrary dimensions of Euclidean hulls.
Precisely, we construct (extended) generalized Reed-Solomon(GRS) codes with assigned dimensions of Euclidean hulls from self-orthogonal GRS codes.
It turns out that our constructions are more general than previous works on Euclidean hulls of (extended) GRS codes.

{\bf Key words}: MDS self-orthogonal code, Generalized Reed-Solomon(GRS) code, Extended generalized Reed-Solomon(GRS) code, Euclidean hull

\section{Introduction}

 \quad\; Let $q$ be a prime power and $\mathbb{F}_{q}$ a finite field with $q$ elements. A $q$-ary $[n,k,d]$ code $C$ is a linear code over $\mathbb{F}_{q}$
with length $n$, dimension $k$ and minimum distance $d$. The Singleton bound states that $k\leq n-d+1$. The code $C$ attaching the Singleton bound(i.e., $k=n-d+1$) is called a maximum distance separable(MDS) code. Due to their optimal properties, MDS codes play an important role in coding theory and related fields, see [\ref{BR}, \ref{SR}].

For any two vectors $\overrightarrow{a}=(a_{1},a_{2},\ldots,a_{n})$ and $\overrightarrow{b}=(b_{1},b_{2},\ldots,b_{n})\in \mathbb{F}_{q}^{n}$,
we define their Euclidean inner product as:
\begin{equation*}
\overrightarrow{a} \cdot \overrightarrow{b}=\sum\limits_{i=1}^{n}a_{i}b_{i}.
\end{equation*}
The dual code of $C$ is defined as
\begin{equation*}
C^\perp=\left\{\overrightarrow{a} \in \mathbb{F}_{q}^{n}:\overrightarrow{a} \cdot \overrightarrow{b}=0  \text{ for any }\overrightarrow{b} \in C \right\}.
\end{equation*}
The hull of $C$ is defined by $$Hull(C)=C\cap C^\perp.$$ Readers are referred to [\ref{AK}] for more details on hull of linear code.

The code $C$ satisfying $Hull(C)=\left\{ \overrightarrow{0} \right\}$ is called a linear complementary dual(LCD) code, which has been extensively
investigated recently ([\ref{CG}, \ref{CMTQ}, \ref{CMTQP}, \ref{LDL}]). In [\ref{CG}], Carlet et al. investigated constructions of LCD codes utilizing cyclic codes, expanded Reed-Solomon codes and generalized residue codes, together with direct sum, puncturing, shortening, extension, $(u|u+v)$ construction and suitable automorphism action. In [\ref{CMTQ}] and [\ref{CMTQP}], Carlet et al. showed that any linear code over $\mathbb{F}_{q}$ $(q>3)$ is equivalent to a Euclidean LCD code and any
linear code over $\mathbb{F}_{q^{2}}$ $(q>2)$ is equivalent to a Hermitian LCD code. In [\ref{LDL}], Li et al. presented some LCD cyclic codes with very
good parameters in general and a well-rounded treatment of reversible cyclic codes is also given. The linear code $C$ satisfying $Hull(C)=C$
(resp. $C^\perp$) is called self-orthogonal (resp. dual containing) code. In particular, the code $C$ satisfying $C=C^\perp$ is called a self-dual code.
Some MDS self-dual codes are constructed through various ways, see [\ref{FF3}, \ref{GKL}, \ref{JX2}, \ref{LLL}, \ref{TW}, \ref{Yan}, \ref{ZF}].
On the other hand, many classes of quantum MDS codes are constructed by MDS Hermitian self-orthogonal codes,
see [\ref{FF1}, \ref{FF2}, \ref{HXC}-\ref{JX1}, \ref{LXW}, \ref{SYC}, \ref{SYZ}, \ref{ZC}, \ref{ZG}].

In general, linear codes with assigned dimensions of hulls can be applied to construct entanglement-assisted quantum error-correcting codes(EAQECCs).
EAQECCs were firstly introduced in [\ref{BDH}]. Wilde and Brun proposed a method for constructing EAQECCs by utilizing classical linear codes over finite
fields [\ref{WB}]. However, it is not explicit to calculate the required number of entangled states. Recently, this number is related to the hull
of classical linear code [\ref{GJG}]. Thereafter, several
new families of optimal EAQECCs are proposed by determining the hulls of classical linear codes, see [\ref{FFLZ}, \ref{GJG}, \ref{LC}, \ref{LCC}].

By using (extended) GRS codes, Luo et al. proposed several infinite families of MDS codes with hulls of arbitrary dimensions, which can be applied to
construct some families of MDS EAQECCs with flexible parameters, see [\ref{LC}] and [\ref{LCC}]. In [\ref{FFLZ}], Fang et al. presented several MDS codes
by utilizing (extended) GRS codes, and determined the dimensions of their Euclidean or Hermitian hulls. In particular, some of the associated EAQECCs have
the required number of maximally entangled states. They also gave some new classes of MDS codes with Hermitian hulls of arbitrary dimensions.

Based on [\ref{FFLZ}], [\ref{LC}] and [\ref{LCC}], we propose a mechanism on the constructions of MDS codes with arbitrary dimensions of Euclidean hulls.
After the main results, we give some examples.

The rest of this paper is organized as follows. In Section 2, we briefly recall some basic notations and properties of (extended) GRS codes. In Section 3,
the mechanism on general constructions of MDS codes with Euclidean hulls of arbitrary dimensions is presented. We give several examples to illustrate the
general construction mechanism in Section 4. Section 5 concludes the paper.

\section{Preliminaries}

 \quad\; In this section, we introduce some basic notations and useful results on (extended) GRS codes.
Readers are referred to [\ref{MS}, Chapter 10] for more details.

Let $\mathbb{F}_{q}$ be a finite field with $q$ elements. Denote by $\F_q^*=\F_q\backslash\{0\}$. For $1\leq n\leq q$, choose two vectors
$\overrightarrow{v}=(v_{1},v_{2},\ldots,v_{n})\in (\mathbb{F}_{q}^{*})^{n}$ and $\overrightarrow{a}=(a_{1},a_{2},\ldots,a_{n})\in \mathbb{F}_{q}^{n}$,
where $a_{i}$ $(1\leq i\leq n)$ are distinct. For an integer $k$ with $1\leq k\leq n$, the GRS code of length $n$ associated with $\overrightarrow{v}$
and $\overrightarrow{a}$ is defined as follows:
\begin{equation}\label{def GRS}
\mathbf{GRS}_{k}(\overrightarrow{a},\overrightarrow{v})=\left\{(v_{1}f(a_{1}),\ldots,v_{n}f(a_{n})):f(x)\in\mathbb{F}_{q}[x], \mathrm{deg}(f(x))\leq k-1\right\}.
\end{equation}
A generator matrix of $\mathbf{GRS}_{k}(\overrightarrow{a},\overrightarrow{v})$ is
\begin{equation*}
G_{k}=\left(
\begin{array}{cccc}
v_{1}&v_{2}&\cdots&v_{n}\\
v_{1}a_{1}&v_{2}a_{2}&\cdots&v_{n}a_{n}\\
v_{1}a_{1}^{2}&v_{2}a_{2}^{2}&\cdots&v_{n}a_{n}^{2}\\
\vdots&\vdots&\ddots&\vdots\\
v_{1}a_{1}^{k-1}&v_{2}a_{2}^{k-1}&\cdots&v_{n}a_{n}^{k-1}\\
\end{array}
\right).
\end{equation*}
The code $\mathbf{GRS}_{k}(\overrightarrow{a},\overrightarrow{v})$ is a $q$-ary $[n,k]$ MDS code and its dual is also MDS [\ref{MS}, Chapter 11].

The extended GRS code associated with $\overrightarrow{v}$ and $\overrightarrow{a}$ is defined by:
\begin{equation}\label{def extended GRS}
\mathbf{GRS}_{k}(\overrightarrow{a},\overrightarrow{v},\infty)=\{(v_{1}f(a_{1}),\ldots,v_{n}f(a_{n}),f_{k-1}):f(x)\in\mathbb{F}_{q}[x],
\mathrm{deg}(f(x))\leq k-1\},
\end{equation}
where $f_{k-1}$ is the coefficient of $x^{k-1}$ in $f(x)$. A generator matrix of $\mathbf{GRS}_{k}(\overrightarrow{a},\overrightarrow{v},\infty)$ is
\begin{equation*}
G_{k,\infty}=\left(
\begin{array}{ccccc}
v_{1}&v_{2}&\cdots&v_{n}&0\\
v_{1}a_{1}&v_{2}a_{2}&\cdots&v_{n}a_{n}&0\\
v_{1}a_{1}^{2}&v_{2}a_{2}^{2}&\cdots&v_{n}a_{n}^{2}&0\\
\vdots&\vdots&\ddots&\vdots&\vdots\\
v_{1}a_{1}^{k-1}&v_{2}a_{2}^{k-1}&\cdots&v_{n}a_{n}^{k-1}&1\\
\end{array}
\right).
\end{equation*}
The code $\mathbf{GRS}_{k}(\overrightarrow{a},\overrightarrow{v},\infty)$ is a $q$-ary $[n+1,k]$ MDS code and its dual is also MDS [\ref{MS}, Chapter 11].

For $1\leq i \leq n$, we define
\begin{equation}\label{solution}
u_i:=\prod_{1\leq j\leq n,j\neq i}(a_{i}-a_{j})^{-1}.
\end{equation}
Let $QR_q$ denote the set of nonzero square elements of $\F_q$. These symbols will be used frequently in this paper.

\begin{lemma}([\ref{CL}, Lemma 2])\label{y1}
A codeword $\overrightarrow{c}=(v_{1}f(a_{1}),\ldots,v_{n}f(a_{n}))\in Hull\big(\mathbf{GRS}_{k}(\overrightarrow{a},\overrightarrow{v})\big)$ if and only if there exists a polynomial $g(x)\in\mathbb{F}_{q}[x]$ with
$\mathrm{deg}(g(x))\leq n-k-1$, such that
\begin{center}
$(v_{1}^{2}f(a_{1}),v_{2}^{2}f(a_{2}),\ldots,v_{n}^{2}f(a_{n}))=(u_{1}g(a_{1}),u_{2}g(a_{2}),\ldots,u_{n}g(a_{n}))$.
\end{center}
\end{lemma}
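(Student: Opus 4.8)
The plan is to reduce the hull-membership condition to dual-code membership and then invoke the explicit description of the dual of a GRS code. Since the vector $\overrightarrow{c}=(v_{1}f(a_{1}),\ldots,v_{n}f(a_{n}))$ with $\mathrm{deg}(f)\le k-1$ lies in $\mathbf{GRS}_{k}(\overrightarrow{a},\overrightarrow{v})$ by its very definition \eqref{def GRS}, the requirement $\overrightarrow{c}\in Hull\big(\mathbf{GRS}_{k}(\overrightarrow{a},\overrightarrow{v})\big)=\mathbf{GRS}_{k}(\overrightarrow{a},\overrightarrow{v})\cap\mathbf{GRS}_{k}(\overrightarrow{a},\overrightarrow{v})^{\perp}$ is equivalent to the single condition $\overrightarrow{c}\in\mathbf{GRS}_{k}(\overrightarrow{a},\overrightarrow{v})^{\perp}$. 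Thus the whole statement hinges on identifying the dual code.

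First I would establish that $\mathbf{GRS}_{k}(\overrightarrow{a},\overrightarrow{v})^{\perp}=\mathbf{GRS}_{n-k}(\overrightarrow{a},\overrightarrow{w})$, where $w_{i}=u_{i}v_{i}^{-1}$ and $u_{i}=\prod_{1\le j\le n,\,j\ne i}(a_{i}-a_{j})^{-1}$ as in \eqref{solution}. To prove the inclusion $\supseteq$, I would take a generic codeword $(w_{i}g(a_{i}))_{i}$ with $\mathrm{deg}(g)\le n-k-1$ and compute its Euclidean inner product with an arbitrary codeword $(v_{i}f(a_{i}))_{i}$ of the primal code: since $v_{i}w_{i}=u_{i}$, this inner product equals $\sum_{i=1}^{n}u_{i}(fg)(a_{i})$, where the product $fg$ has degree at most $(k-1)+(n-k-1)=n-2$.

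The crux is then the identity $\sum_{i=1}^{n}u_{i}h(a_{i})=0$ for every $h\in\mathbb{F}_{q}[x]$ with $\mathrm{deg}(h)\le n-2$; applying it to $h=fg$ forces the inner product to vanish. I would prove this identity via Lagrange interpolation on the distinct nodes $a_{1},\ldots,a_{n}$: the $i$-th Lagrange basis polynomial $L_{i}(x)=\prod_{j\ne i}\frac{x-a_{j}}{a_{i}-a_{j}}$ has leading coefficient exactly $u_{i}$, and any $h$ with $\mathrm{deg}(h)\le n-1$ satisfies $h(x)=\sum_{i=1}^{n}h(a_{i})L_{i}(x)$. Comparing the coefficient of $x^{n-1}$ on both sides shows that $\sum_{i=1}^{n}u_{i}h(a_{i})$ equals the coefficient of $x^{n-1}$ in $h$, which is $0$ whenever $\mathrm{deg}(h)\le n-2$. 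This gives $\supseteq$, and since both codes have dimension $n-k$, equality follows. This dual-code computation (and in particular the interpolation identity) is the main obstacle; the remaining steps are bookkeeping.

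Finally I would translate membership into the stated polynomial condition. By the dual description, $\overrightarrow{c}\in\mathbf{GRS}_{k}(\overrightarrow{a},\overrightarrow{v})^{\perp}$ if and only if there exists $g$ with $\mathrm{deg}(g)\le n-k-1$ satisfying $v_{i}f(a_{i})=u_{i}v_{i}^{-1}g(a_{i})$ for all $i$. Multiplying through by $v_{i}$, which is invertible since $v_{i}\in\mathbb{F}_{q}^{*}$, yields $v_{i}^{2}f(a_{i})=u_{i}g(a_{i})$ for every $i$, i.e. the componentwise equality $(v_{1}^{2}f(a_{1}),\ldots,v_{n}^{2}f(a_{n}))=(u_{1}g(a_{1}),\ldots,u_{n}g(a_{n}))$. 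This is precisely the asserted condition, so the equivalence is complete.
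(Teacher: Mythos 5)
Your proof is correct. Note that the paper itself offers no proof of this lemma: it is quoted directly from [\ref{CL}, Lemma 2], so there is no internal argument to compare against. Your route --- reduce hull membership to dual membership (valid since $\overrightarrow{c}$ lies in the primal code by definition), identify $\mathbf{GRS}_{k}(\overrightarrow{a},\overrightarrow{v})^{\perp}=\mathbf{GRS}_{n-k}(\overrightarrow{a},\overrightarrow{w})$ with $w_{i}=u_{i}v_{i}^{-1}$, and then clear the $v_{i}$ factor --- is the standard one, and it is essentially how the cited reference argues as well. One small economy you missed: the interpolation identity at the heart of your argument (that $\sum_{i=1}^{n}u_{i}h(a_{i})$ equals the coefficient of $x^{n-1}$ in $h$ for any $h$ of degree at most $n-1$) is exactly Lemma \ref{y3} of the paper, stated there for the monomials $h(x)=x^{m}$; by linearity you could have cited it rather than rederiving it via Lagrange basis polynomials, though your derivation is clean and actually proves that lemma too. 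The only loose end is the degenerate case $k=n$, where the dual code is trivial and your dimension count reads $n-k=0$; there the claim still holds vacuously (with $g=0$, forcing $f(a_{i})=0$ for all $i$), but it is worth a sentence if you want the statement in full generality.
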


\begin{lemma}([\ref{CL}, Lemma 3])\label{y2}
A codeword $\overrightarrow{c}=(v_{1}f(a_{1}),\ldots,v_{n}f(a_{n}),f_{k-1}) \in Hull\big(\mathbf{GRS}_{k}(\overrightarrow{a},\overrightarrow{v},\infty)\big)$ if and only if there exists a polynomial $g(x)\in\mathbb{F}_{q}[x]$
with $\mathrm{deg}(g(x))\leq n-k$, such that
\begin{center}
$(v_{1}^{2}f(a_{1}),v_{2}^{2}f(a_{2}),\ldots,v_{n}^{2}f(a_{n}),f_{k-1})=(u_{1}g(a_{1}),u_{2}g(a_{2}),\ldots,u_{n}g(a_{n}),-g_{n-k})$.
\end{center}
\end{lemma}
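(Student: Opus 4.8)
The plan is to identify the Euclidean dual of the extended code explicitly and then read off the hull condition. Since $\overrightarrow{c}$ is presented in codeword form $(v_{1}f(a_{1}),\ldots,v_{n}f(a_{n}),f_{k-1})$, it already lies in $\mathbf{GRS}_{k}(\overrightarrow{a},\overrightarrow{v},\infty)$; hence membership in the hull is equivalent to membership in the dual alone, and it suffices to determine exactly when $\overrightarrow{c}$ is orthogonal to all of $\mathbf{GRS}_{k}(\overrightarrow{a},\overrightarrow{v},\infty)^{\perp}$. I would first describe this dual: I claim that
\begin{equation*}
\mathbf{GRS}_{k}(\overrightarrow{a},\overrightarrow{v},\infty)^{\perp}=\big\{(w_{1}g(a_{1}),\ldots,w_{n}g(a_{n}),-g_{n-k}):g\in\mathbb{F}_{q}[x],\ \deg g\le n-k\big\},
\end{equation*}
where $w_{i}=u_{i}/v_{i}$ and $g_{n-k}$ is the coefficient of $x^{n-k}$ in $g$. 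The map $g\mapsto(w_{i}g(a_{i}),-g_{n-k})$ is injective (a nonzero $g$ of degree $\le n-k<n$ cannot vanish at all $n$ distinct nodes $a_{i}$), so this space has dimension $(n-k+1)=(n+1)-k$, matching $\dim\mathbf{GRS}_{k}(\overrightarrow{a},\overrightarrow{v},\infty)^{\perp}$; once orthogonality is checked, equality of the two sets follows immediately.

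The key computational step is the orthogonality verification, which rests on a single interpolation identity: for any polynomial $p$ with $\deg p\le n-1$, the coefficient of $x^{n-1}$ in $p$ equals $\sum_{i=1}^{n}u_{i}\,p(a_{i})$, a direct consequence of writing $p$ in Lagrange form on the nodes $a_{1},\ldots,a_{n}$ and reading off its leading coefficient. Applying this to $p=fg$, whose degree is at most $(k-1)+(n-k)=n-1$, I would compute the inner product of a codeword $(v_{i}f(a_{i}),f_{k-1})$ with a candidate dual element $(w_{i}g(a_{i}),-g_{n-k})$ as
\begin{equation*}
\sum_{i=1}^{n}v_{i}w_{i}f(a_{i})g(a_{i})-f_{k-1}g_{n-k}=\sum_{i=1}^{n}u_{i}(fg)(a_{i})-f_{k-1}g_{n-k},
\end{equation*}
using $v_{i}w_{i}=u_{i}$. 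The only way to produce $x^{n-1}$ in $fg$ is $x^{k-1}\cdot x^{n-k}$, so the coefficient of $x^{n-1}$ in $fg$ is exactly $f_{k-1}g_{n-k}$; the identity then forces the sum to equal $f_{k-1}g_{n-k}$, and the whole expression vanishes. This confirms the claimed dual, and the sign $-g_{n-k}$ on the last coordinate is precisely what makes the two terms cancel.

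Finally, I would translate membership into the stated form. With the dual description in hand, $\overrightarrow{c}\in Hull\big(\mathbf{GRS}_{k}(\overrightarrow{a},\overrightarrow{v},\infty)\big)$ is equivalent to the existence of $g$ with $\deg g\le n-k$ satisfying $v_{i}f(a_{i})=w_{i}g(a_{i})=(u_{i}/v_{i})g(a_{i})$ for $1\le i\le n$ together with $f_{k-1}=-g_{n-k}$; multiplying the first $n$ equalities by $v_{i}$ yields $v_{i}^{2}f(a_{i})=u_{i}g(a_{i})$, which is exactly the asserted vector identity, and both implications are immediate from the set equality. The main obstacle I anticipate is purely bookkeeping, namely pinning down the correct weight vector $w_{i}=u_{i}/v_{i}$ and the sign on the extended coordinate, and stating the leading-coefficient identity cleanly; this parallels the non-extended Lemma \ref{y1}, where the same mechanism applied to $\deg p\le n-2$ makes $\sum_{i}u_{i}p(a_{i})=0$.
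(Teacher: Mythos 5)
Your proof is correct, but there is nothing in the paper to compare it against: the paper does not prove Lemma \ref{y2} at all, it imports it verbatim from [\ref{CL}, Lemma 3]. So what your write-up buys is a self-contained derivation resting only on material already in the paper. The argument itself is sound at every step: since $\overrightarrow{c}$ is given in codeword form it lies in $\mathbf{GRS}_{k}(\overrightarrow{a},\overrightarrow{v},\infty)$, so hull membership does reduce to dual membership; your description of the dual as $\left\{(u_{1}v_{1}^{-1}g(a_{1}),\ldots,u_{n}v_{n}^{-1}g(a_{n}),-g_{n-k}):\deg g\leq n-k\right\}$ is established correctly by the orthogonality computation together with the dimension count $n-k+1=(n+1)-k$, where injectivity holds because a nonzero $g$ of degree at most $n-k<n$ cannot vanish at all $n$ distinct nodes. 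The engine of the computation, namely that the coefficient of $x^{n-1}$ in any $p$ with $\deg p\leq n-1$ equals $\sum_{i=1}^{n}u_{i}p(a_{i})$, is precisely Lemma \ref{y3} of the paper (cited from [\ref{LXW}]) restated in polynomial form: expand $p=\sum_{m}p_{m}x^{m}$ and use $\sum_{i}a_{i}^{m}u_{i}=0$ for $0\leq m\leq n-2$ and $=1$ for $m=n-1$. Your observation that the only contribution to $x^{n-1}$ in $fg$ is $f_{k-1}g_{n-k}$ is exactly what forces the sign $-g_{n-k}$ on the extended coordinate, and the final translation via multiplication by $v_{i}$ is immediate in both directions. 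The same mechanism with $\deg(fg)\leq n-2$, where the sum vanishes outright, gives Lemma \ref{y1}, so your approach also makes that companion lemma self-contained.
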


\begin{lemma}([\ref{LXW}, Lemma 5])\label{y3}
Let $a_1,a_2,\cdots,a_n$ be distinct elements in $\mathbb{F}_{q}$. Then we have
\begin{equation*}
\sum_{i=1}^na_i^mu_i=
\begin{cases}
0, \text{ $0\leq m\leq n-2$;}\\
1, \text{ $ m=n-1$.}
\end{cases}
\end{equation*}
\end{lemma}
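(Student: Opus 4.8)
The plan is to recognize the quantity $u_i$ as the leading coefficient of the $i$-th Lagrange basis polynomial and then extract the identity by comparing top-degree coefficients in a Lagrange interpolation. First, for the distinct nodes $a_1,\ldots,a_n$ introduce the Lagrange basis polynomials
\[
\ell_i(x)=\prod_{1\leq j\leq n,\,j\neq i}\frac{x-a_j}{a_i-a_j},\qquad 1\leq i\leq n,
\]
each of degree $n-1$. The numerator $\prod_{j\neq i}(x-a_j)$ is monic of degree $n-1$, so by the definition in (\ref{solution}) the coefficient of $x^{n-1}$ in $\ell_i(x)$ is exactly $u_i=\prod_{j\neq i}(a_i-a_j)^{-1}$.

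Next, fix $m$ with $0\leq m\leq n-1$ and apply the Lagrange interpolation formula to the polynomial $f(x)=x^m$, whose degree does not exceed $n-1$. Since interpolation of a polynomial of degree at most $n-1$ at $n$ distinct points is exact, I obtain the polynomial identity
\[
x^m=\sum_{i=1}^n a_i^m\,\ell_i(x).
\]
Comparing the coefficient of $x^{n-1}$ on both sides produces $\sum_{i=1}^n a_i^m u_i$ on the right, while on the left this coefficient equals $0$ when $m\leq n-2$ and $1$ when $m=n-1$. This is precisely the claimed case distinction.

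Because the argument uses only the exactness of interpolation and the identification of $u_i$ as a leading coefficient, there is no genuine obstacle; the single point requiring care is the degree bookkeeping, namely verifying that each $\ell_i$ has degree exactly $n-1$ with leading coefficient $u_i$ (legitimate over $\F_q$ since the $a_i$ are distinct, so the denominators are nonzero) so that the coefficient extraction is valid. An equally short alternative route is the partial fraction expansion $\prod_{j=1}^n(x-a_j)^{-1}=\sum_{i=1}^n u_i(x-a_i)^{-1}$: expanding each summand as a formal power series in $x^{-1}$ and matching the coefficients of $x^{-(m+1)}$ for $0\leq m\leq n-1$ against the expansion of the left-hand side, which begins at order $x^{-n}$, yields the same conclusion.
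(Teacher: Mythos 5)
Your proof is correct. Note that the paper does not prove this lemma at all --- it imports it by citation from [\ref{LXW}, Lemma 5] --- so there is no internal argument to compare against; your Lagrange-interpolation argument (identifying $u_i$ as the leading coefficient of the $i$-th Lagrange basis polynomial and extracting the coefficient of $x^{n-1}$ from the exact interpolation of $x^m$) is the standard proof of this identity, it is complete, and the degree bookkeeping you flag is handled correctly, as is the partial-fraction alternative you sketch.
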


In Corollary 2.4 of [\ref{JX2}], sufficient condition for GRS codes being self-dual is presented. In the following lemma, we show that the condition
is also necessary. Furthermore, an equivalent condition for a GRS code being self-orthogonal is presented. 

\begin{lemma}\label{GRS}
If $1\leq m\leq \lfloor\frac{n}{2}\rfloor$, then $\mathbf{GRS}_m(\overrightarrow{a},\overrightarrow{v})$ is Euclidean self-orthogonal if and only if
$v_i^2=\lambda(a_i)u_i\neq0(1\leq i\leq n)$, where $\lambda(a_i)=\lambda_0+\lambda_1 a_i+\cdots+\lambda_{n-2m}a_i^{n-2m}$ with
$\lambda_h\in\F_q(0\leq h\leq n-2m)$.
\end{lemma}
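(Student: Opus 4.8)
The plan is to translate Euclidean self-orthogonality into a system of power-sum identities for the values $v_i^2$, and then to identify the solution space of that system with the set of admissible vectors $\big(\lambda(a_i)u_i\big)_i$ through a dimension count.

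First I would note that $\mathbf{GRS}_m(\overrightarrow{a},\overrightarrow{v})$ is Euclidean self-orthogonal if and only if $G_m G_m^{\mathrm{T}}=0$, where $G_m$ is the generator matrix displayed in Section 2 with $k=m$. The $(s,t)$-entry of $G_m G_m^{\mathrm{T}}$, for $0\le s,t\le m-1$, equals $\sum_{i=1}^{n}v_i^2 a_i^{s+t}$, and as $(s,t)$ ranges over all such pairs the exponent $s+t$ attains exactly the values $0,1,\dots,2m-2$. Hence self-orthogonality is equivalent to
\[
\sum_{i=1}^{n}v_i^{2}a_i^{\ell}=0 \qquad \text{for } 0\le \ell\le 2m-2.
\]

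Next I would introduce two subspaces of $\mathbb{F}_q^{n}$. Let $V$ be the solution space of the above system in the unknowns $w_i:=v_i^2$; its coefficient matrix is the $(2m-1)\times n$ matrix $(a_i^{\ell})$, a Vandermonde matrix of full rank $2m-1$ because $2m-1\le n$ and the $a_i$ are distinct, so $\dim V=n-2m+1$. Let $W$ be the image of the space of polynomials of degree at most $n-2m$ under the evaluation-and-scaling map $\lambda\longmapsto\big(\lambda(a_1)u_1,\dots,\lambda(a_n)u_n\big)$. Since a nonzero polynomial of degree less than $n$ cannot vanish at all $n$ distinct nodes and each $u_i\ne 0$, this map is injective, whence $\dim W=(n-2m)+1=n-2m+1$.

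The crux is the inclusion $W\subseteq V$, which I would derive from Lemma \ref{y3}. If $\deg\lambda\le n-2m$ and $0\le \ell\le 2m-2$, then $\lambda(x)x^{\ell}$ has degree at most $(n-2m)+(2m-2)=n-2$; writing $\lambda(x)x^\ell=\sum_{r}c_r x^{r}$ with every $r\le n-2$ and applying Lemma \ref{y3} termwise yields $\sum_{i=1}^{n}\lambda(a_i)u_i a_i^{\ell}=\sum_{r}c_r\sum_{i=1}^{n}u_i a_i^{r}=0$, so $\big(\lambda(a_i)u_i\big)_i\in V$. This is exactly where the degree bound is sharp: $n-2$ is the largest exponent for which Lemma \ref{y3} forces the power sum to vanish, which pins down the ranges $\deg\lambda\le n-2m$ and $\ell\le 2m-2$. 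Finally, since $W\subseteq V$ and $\dim W=\dim V=n-2m+1$, we obtain $W=V$. Therefore $(v_i^2)_i$ satisfies the power-sum system --- equivalently the code is self-orthogonal --- if and only if $(v_i^2)_i\in W$, i.e. $v_i^2=\lambda(a_i)u_i$ for some $\lambda$ with $\deg\lambda\le n-2m$; the nonvanishing $v_i^2\ne 0$ is automatic from $\overrightarrow{v}\in(\mathbb{F}_q^{*})^{n}$. I expect the main obstacle to be the ``only if'' direction: a direct interpolation argument would require evaluating the sums $\sum_i u_i a_i^{r}$ for $r\ge n$, which lie outside the range covered by Lemma \ref{y3}; the dimension count circumvents this computation and is the decisive simplification.
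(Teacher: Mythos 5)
Your proposal is correct and follows essentially the same route as the paper: both reduce self-orthogonality to the power-sum system $\sum_i v_i^2 a_i^{\ell}=0$ for $0\le \ell\le 2m-2$, and both identify its solution space with the span of the vectors $\bigl(a_i^h u_i\bigr)_i$, $0\le h\le n-2m$, via the Vandermonde rank count (the paper phrases this as exhibiting a ``basic solution system,'' while you phrase it as the equality $W=V$ of two subspaces of the same dimension, with Lemma \ref{y3} supplying the inclusion the paper leaves implicit). The only difference is expository: your version spells out the verification that these vectors are indeed solutions and that the evaluation map is injective, which the paper asserts without proof.
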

\begin{proof}
It is easy to check that
\begin{equation*}
\begin{aligned}
\mathbf{GRS}_{m}(\overrightarrow{a},\overrightarrow{v})\,\text{is self-orthogonal}
\Leftrightarrow \sum\limits_{i=1}^n v_{i}^{2}a_{i}^{l}=0 \,\,\text{for}\,\, 0\leq l \leq 2m-2.
\end{aligned}
\end{equation*}
Denote by $x_i=v_i^2$($1\leq i\leq n$). The system of linear equations
\begin{equation}\label{sum}
\begin{aligned}
\sum\limits_{i=1}^n a_{i}^{l}x_i=0
\end{aligned}
\end{equation}
for $0\leq l\leq 2m-2$ has solutions
\begin{equation}\label{AX=0}
(u_1,\ldots,u_n),  (a_1 u_1,\ldots,a_n u_n),\ldots, (a_1^{n-2m}u_1,\ldots,a_n^{n-2m}u_n),
\end{equation}
which are linear independent. Note that the rank of coefficient matrix of (\ref{sum}) is $2m-1$.  It follows that (\ref{AX=0}) is a basic
solution system of (\ref{sum}). Therefore,
\begin{center}
$v_i^2=\sum\limits_{h=0}^{n-2m}\lambda_h a_i^hu_i\neq 0\,\,\text{for any}\,\,1\leq i\leq n\,\,\text{and}\,\,\lambda_h\in \F_q$.
\end{center}

Conversely, let $v_i^2=\lambda(a_i)u_i\neq0(1\leq i\leq n)$ where $\lambda(a_i)=\lambda_0+\lambda_1 a_i+\cdots+\lambda_{n-2m}a_i^{n-2m}$ with
$\lambda_{h}\in\F_q(0\leq h\leq n-2m)$. Then
\begin{equation*}
\sum\limits_{i=1}^n v_i^{2}a_{i}^l=0 \ \text{for}\ 0\leq l \leq 2m-2.
\end{equation*}
It implies $\mathbf{GRS}_m(\overrightarrow{a},\overrightarrow{v})$ is Euclidean self-orthogonal.
\end{proof}

\begin{corollary}\label{cor1}
Assume $1\leq m\leq \lfloor\frac{n}{2}\rfloor$. Then
$\mathbf{GRS}_{m}(\overrightarrow{a},\overrightarrow{v})^\perp=\mathbf{GRS}_{n-m}(\overrightarrow{a},\overrightarrow{v})$ if and only if there exists
$\lambda\in \mathbb{F}_{q}^{*}$  such that $\lambda u_{i}=v_{i}^{2}$, where $1\leq i\leq n$. In particular, when $m=\frac{n}{2}$ with $n$ even,
$\mathbf{GRS}_{\frac{n}{2}}(\overrightarrow{a},\overrightarrow{v})$ is MDS self-dual (see Corollary 2.4 of [\ref{JX2}]).
\end{corollary}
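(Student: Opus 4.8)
The plan is to translate the assertion, via the standard description of the dual of a GRS code, into the statement that two GRS codes of dimension $m$ on the same evaluation points coincide, and then to extract the proportionality of their multipliers by an elementary degree count. Throughout I write $v_i':=u_iv_i^{-1}\in\F_q^*$ for $1\le i\le n$, and $\overrightarrow{v'}=(v_1',\ldots,v_n')$.

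First I would establish the dual formula $\mathbf{GRS}_k(\overrightarrow{a},\overrightarrow{v})^\perp=\mathbf{GRS}_{n-k}(\overrightarrow{a},\overrightarrow{v'})$ for every $1\le k\le n-1$. Indeed, for $\deg f\le k-1$ and $\deg g\le n-k-1$ the Euclidean product of $(v_if(a_i))_i$ and $(v_i'g(a_i))_i$ equals $\sum_{i=1}^n u_i(fg)(a_i)$ because $v_iv_i'=u_i$, and this vanishes by Lemma \ref{y3} since $\deg(fg)\le n-2$; as both codes have dimension $n-k$, they are equal. Specialising to $k=m$ gives $\mathbf{GRS}_m(\overrightarrow{a},\overrightarrow{v})^\perp=\mathbf{GRS}_{n-m}(\overrightarrow{a},\overrightarrow{v'})$, so the stated identity is equivalent to $\mathbf{GRS}_{n-m}(\overrightarrow{a},\overrightarrow{v'})=\mathbf{GRS}_{n-m}(\overrightarrow{a},\overrightarrow{v})$. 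Applying $\perp$ (an involution on linear codes) and noting that the dual formula sends $\overrightarrow{v'}$ back to $\overrightarrow{v}$, since $u_i(v_i')^{-1}=v_i$, this is in turn equivalent to
\[
\mathbf{GRS}_m(\overrightarrow{a},\overrightarrow{v})=\mathbf{GRS}_m(\overrightarrow{a},\overrightarrow{v'}).
\]

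The ``if'' direction is then immediate: if $v_i^2=\lambda u_i$ with $\lambda\in\F_q^*$, then $v_i'=u_iv_i^{-1}=\lambda^{-1}v_i$, so the two multiplier vectors are proportional and the dimension-$m$ codes visibly coincide. For the ``only if'' direction, assume these two codes are equal and evaluate the constant codeword $f=1$ in each direction. The membership $(v_i)_i\in\mathbf{GRS}_m(\overrightarrow{a},\overrightarrow{v'})$ yields a polynomial $h$ with $\deg h\le m-1$ and $v_i=v_i'h(a_i)$, i.e.\ $v_i^2u_i^{-1}=h(a_i)$ for all $i$; symmetrically, $(v_i')_i\in\mathbf{GRS}_m(\overrightarrow{a},\overrightarrow{v})$ yields $\tilde h$ with $\deg\tilde h\le m-1$ and $u_iv_i^{-2}=\tilde h(a_i)$ for all $i$.

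Multiplying the two relations gives $h(a_i)\tilde h(a_i)=1$ for every $i$, so $h(x)\tilde h(x)\equiv 1\pmod{\prod_{i=1}^n(x-a_i)}$. This is the one point that must be watched, and where the hypothesis $m\le\lfloor\tfrac n2\rfloor$ enters: since $\deg(h\tilde h)\le 2m-2<n$, the congruence forces the genuine identity $h(x)\tilde h(x)=1$, whence $h$ is a nonzero constant $\lambda$ and $v_i^2=\lambda u_i$, as required. Reducing to dimension $m$ rather than $n-m$ is essential precisely because it keeps the product degree below $n$; at dimension $n-m$ the analogous bound $2(n-m-1)<n$ would fail in general. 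Finally, when $m=\tfrac n2$ with $n$ even we have $n-m=m$, so the identity reads $\mathbf{GRS}_{n/2}(\overrightarrow{a},\overrightarrow{v})^\perp=\mathbf{GRS}_{n/2}(\overrightarrow{a},\overrightarrow{v})$, i.e.\ self-duality, recovering Corollary 2.4 of [\ref{JX2}]. The main obstacle is exactly the degree control in this last step; everything else is the dual computation built on Lemma \ref{y3}.
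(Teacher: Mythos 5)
Your proof is correct, but it takes a genuinely different route from the paper's. The paper obtains this corollary as a direct byproduct of Lemma \ref{GRS}: by a dimension count, the equality $\mathbf{GRS}_{m}(\overrightarrow{a},\overrightarrow{v})^\perp=\mathbf{GRS}_{n-m}(\overrightarrow{a},\overrightarrow{v})$ amounts to mutual orthogonality of the two codes, i.e.\ to the Vandermonde-type linear system $\sum_{i=1}^n v_i^2a_i^l=0$ for $0\leq l\leq n-2$; this system has rank $n-1$, and $(u_1,\ldots,u_n)$ is a nonzero solution by Lemma \ref{y3}, so the solution space is exactly the line spanned by $(u_1,\ldots,u_n)$, giving $v_i^2=\lambda u_i$ with $\lambda\neq 0$ in one stroke (this is the same argument as in the proof of Lemma \ref{GRS}, with $2m-2$ replaced by $n-2$, which collapses the polynomial $\lambda(x)$ to a constant). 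You instead first prove the classical dual-multiplier formula $\mathbf{GRS}_k(\overrightarrow{a},\overrightarrow{v})^\perp=\mathbf{GRS}_{n-k}(\overrightarrow{a},\overrightarrow{v'})$ with $v_i'=u_iv_i^{-1}$, use involutivity of duality to reduce the assertion to the equality $\mathbf{GRS}_m(\overrightarrow{a},\overrightarrow{v})=\mathbf{GRS}_m(\overrightarrow{a},\overrightarrow{v'})$ of two dimension-$m$ codes, and then extract proportionality of the multipliers by evaluating the constant codeword in both directions and invoking the degree bound $2m-2<n$. Both arguments ultimately rest on Lemma \ref{y3}, but they use it differently: the paper uses it to exhibit a basic solution system of a linear system, while you use it to establish the dual formula. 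The paper's route is shorter given that the machinery of Lemma \ref{GRS} is already in place; your route isolates a reusable fact of independent interest, namely that two GRS codes of the same dimension on the same evaluation points coincide if and only if their multiplier vectors are proportional, and it makes explicit where the hypothesis $m\leq\lfloor\frac{n}{2}\rfloor$ enters (the degree control in $h(x)\tilde h(x)=1$). In fact, your duality reduction shows the equivalence holds for any $1\leq m\leq n-1$, since one may always pass to the side of dimension $\min(m,n-m)\leq\lfloor\frac{n}{2}\rfloor$ before running the degree argument.
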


Similarly as GRS codes, Lemma 2 of [\ref{Yan}] presents sufficient condition for extended GRS codes being self-dual. The following
lemma shows that the condition is also necessary. More precisely, we give a criterion for an extended GRS code being self-orthogonal.

\begin{lemma}\label{eGRS}
If $1\leq m\leq \lfloor\frac{n+1}{2}\rfloor$, then $\mathbf{GRS}_m(\overrightarrow{a},\overrightarrow{v},\infty)$ is Euclidean self-orthogonal
if and only if $v_i^2=\lambda(a_i)u_i\neq0(1\leq i\leq n)$, where $\lambda(a_i)=\lambda_0+\lambda_1 a_i+\cdots+\lambda_{n-2m}a_i^{n-2m}-a_i^{n-2m+1}$
with $\lambda_h\in\F_q(0\leq h\leq n-2m)$.
\end{lemma}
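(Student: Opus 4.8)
The plan is to mirror the structure of the proof of Lemma \ref{GRS}, but to account for the extra coordinate in the generator matrix $G_{m,\infty}$, which forces the resulting linear system to be \emph{inhomogeneous} rather than homogeneous. First I would expand the self-orthogonality condition directly from the rows of $G_{m,\infty}$. Writing the rows as $\mathbf{r}_l=(v_1a_1^l,\ldots,v_na_n^l,0)$ for $0\le l\le m-2$ and $\mathbf{r}_{m-1}=(v_1a_1^{m-1},\ldots,v_na_n^{m-1},1)$, the pairwise Euclidean inner products equal $\sum_{i=1}^n v_i^2 a_i^{l+l'}$ whenever $l,l'$ are not both equal to $m-1$, while the top row contributes an extra $1$ to $\mathbf{r}_{m-1}\cdot\mathbf{r}_{m-1}$. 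Setting $x_i=v_i^2$, self-orthogonality is therefore equivalent to the system
\begin{equation*}
\sum_{i=1}^n x_i a_i^t=0\ \ (0\le t\le 2m-3),\qquad \sum_{i=1}^n x_i a_i^{2m-2}=-1.
\end{equation*}

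The second step is to solve this inhomogeneous system, for which Lemma \ref{y3} is the essential tool. Since $m\le\lfloor\frac{n+1}{2}\rfloor$ gives $2m-1\le n$, the coefficient matrix (rows indexed by $0\le t\le 2m-2$) is a full-rank Vandermonde-type matrix of rank $2m-1$, so the solution set is an affine subspace of dimension $n-2m+1$. I would treat the homogeneous part and a particular solution separately. The vectors $(a_i^hu_i)_{i}$ for $0\le h\le n-2m$ satisfy the homogeneous equations because $\sum_i u_i a_i^{t+h}=0$ whenever $t+h\le n-2$, and a degree count shows these $n-2m+1$ vectors are linearly independent (a polynomial of degree $\le n-2m<n$ vanishing at all $a_i$ is zero), hence form a basis of the homogeneous solution space. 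For a particular solution I would try $(-a_i^{n-2m+1}u_i)_i$: by Lemma \ref{y3} the sums $\sum_i u_i a_i^{t+n-2m+1}$ vanish for $0\le t\le 2m-3$ (then $t+n-2m+1\le n-2$) and equal $1$ when $t=2m-2$ (then $t+n-2m+1=n-1$), producing exactly the required right-hand side $-1$ after the sign.

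Adding the general homogeneous solution to this particular solution yields $x_i=v_i^2=\bigl(\sum_{h=0}^{n-2m}\lambda_h a_i^h-a_i^{n-2m+1}\bigr)u_i=\lambda(a_i)u_i$ with $\lambda$ of exactly the stated form, and the requirement $v_i\in\F_q^*$ translates into $v_i^2=\lambda(a_i)u_i\neq0$. For the converse I would reverse this computation: assuming $v_i^2=\lambda(a_i)u_i$ with the prescribed $\lambda$, a direct expansion of $\sum_i v_i^2 a_i^t$ via Lemma \ref{y3} recovers $0$ for $0\le t\le 2m-3$ and $-1$ for $t=2m-2$, reinstating the self-orthogonality conditions. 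The main obstacle, compared with the GRS case of Lemma \ref{GRS}, is locating the correct particular solution and verifying that the degree of the ``extra'' monomial $-a_i^{n-2m+1}$ is pinned down precisely by the index $n-1$ arising in Lemma \ref{y3}; careful bookkeeping of the ranges of $t+h$ is what guarantees the stated degree bound on $\lambda$ is tight and that the $-a_i^{n-2m+1}$ term is forced rather than free.
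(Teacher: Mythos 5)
Your proposal is correct and takes essentially the same approach as the paper: both reduce self-orthogonality to the identical Vandermonde-type linear system $\sum_{i=1}^n x_i a_i^t=0$ ($0\le t\le 2m-3$), $\sum_{i=1}^n x_i a_i^{2m-2}=-1$, and solve it via Lemma \ref{y3} with the vectors $(a_i^h u_i)_i$, finishing with the same direct converse verification. The only difference is organizational --- the paper first solves the homogeneous subsystem ($0\le t\le 2m-3$), obtaining the span of $(a_i^h u_i)_i$ for $0\le h\le n-2m+1$, and then substitutes into the last equation to force $\lambda_{n-2m+1}=-1$, whereas you keep all $2m-1$ equations together and write the solution set as your explicit particular solution $(-a_i^{n-2m+1}u_i)_i$ plus the kernel spanned by $0\le h\le n-2m$; the two bookkeepings are the same computation.
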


\begin{proof}
By taking inner product of all pairs in the basis of $\mathbf{GRS}_m(\overrightarrow{a},\overrightarrow{v},\infty)$,
\begin{equation*}
\begin{aligned}
\mathbf{GRS}_m(\overrightarrow{a},\overrightarrow{v},\infty)\,\text{is self-orthogonal}
\Leftrightarrow \begin{cases}
\sum\limits_{i=1}^n v_i^{2}a_{i}^l=0 ,&0\leq l \leq 2m-3;\\
\sum\limits_{i=1}^n v_i^{2}a_{i}^{2m-2}+1=0.
\end{cases}
\end{aligned}
\end{equation*}
Denote by $x_i=v_i^2(1\leq i\leq n)$. If we only consider the system of equations $\sum\limits_{i=1}^n a_{i}^lx_i=0 (0\leq l \leq 2m-3)$,
similarly as Lemma \ref{GRS}, the solution is
\begin{equation}\label{vsquare}
x_i=\sum\limits_{h=0}^{n-2m+1}\lambda_h a_i^hu_i\neq 0\,\,\text{for any}\,\,1\leq i\leq n\,\, \text{and}\,\,\lambda_h\in \F_q.
\end{equation}
Substituting (\ref{vsquare}) to $\sum\limits_{i=1}^n a_{i}^{2m-2}x_i+1=0$,
\begin{equation*}
\sum\limits_{i=1}^n\lambda_{n-2m+1}a_i^{n-1}u_i+1=0.
\end{equation*}
It deduces that $\lambda_{n-2m+1}=-1$ from Lemma \ref{y3}. Hence $v_i^2=\lambda(a_i)u_i(1\leq i\leq n)$ where
$\lambda(a_i)=\lambda_0+\lambda_1 a_i+\cdots+\lambda_{n-2m}a_i^{n-2m}-a_i^{n-2m+1}$ with $\lambda_h\in\F_q(0\leq h\leq n-2m)$.

Conversely, let $v_i^2=\lambda(a_i)u_i\neq0$ for any $1\leq i\leq n$ and
$\lambda(a_i)=\lambda_0+\lambda_1 a_i+\cdots+\lambda_{n-2m}a_i^{n-2m}-a_i^{n-2m+1}$ with $\lambda_h\in\F_q(0\leq h\leq n-2m)$. Then
\begin{equation*}
\begin{cases}
\sum\limits_{i=1}^n v_i^{2}a_{i}^l=0 ,&0\leq l \leq 2m-3\\
\sum\limits_{i=1}^n v_i^{2}a_{i}^{2m-2}+1=0.
\end{cases}
\end{equation*}
This completes the proof.
\end{proof}

\begin{corollary}\label{cor2}
For $1\leq m\leq \lfloor\frac{n+1}{2}\rfloor$, the code
$\mathbf{GRS}_{m}(\overrightarrow{a},\overrightarrow{v},\infty)^{\perp}=\mathbf{GRS}_{n+1-m}(\overrightarrow{a},\overrightarrow{v},\infty)$ if and only if
$-u_{i}=v_{i}^{2}$ for all $i=1,2 ,\ldots,n$. In particular, when $n$ is odd and $m=\frac{n+1}{2}$,
$\mathbf{GRS}_{\frac{n+1}{2}}(\overrightarrow{a},\overrightarrow{v})$ is MDS self-dual (see Lemma 2.2 of [\ref{Yan}]).
\end{corollary}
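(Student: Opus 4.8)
The plan is to reduce the asserted identity of codes to a single cross-orthogonality condition and then solve the resulting linear system with Lemma \ref{y3}. First I would observe that
$$\dim \mathbf{GRS}_{m}(\overrightarrow{a},\overrightarrow{v},\infty)^{\perp}=(n+1)-m=\dim \mathbf{GRS}_{n+1-m}(\overrightarrow{a},\overrightarrow{v},\infty),$$
so the two codes have the same dimension $n+1-m$. Hence $\mathbf{GRS}_{m}(\overrightarrow{a},\overrightarrow{v},\infty)^{\perp}=\mathbf{GRS}_{n+1-m}(\overrightarrow{a},\overrightarrow{v},\infty)$ holds if and only if the two codes are mutually orthogonal, i.e. every codeword of one annihilates every codeword of the other; by bilinearity this is equivalent to the orthogonality of all pairs of rows of the generator matrices $G_{m,\infty}$ and $G_{n+1-m,\infty}$.

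Next I would compute these row inner products explicitly. Taking the row of $G_{m,\infty}$ indexed by the monomial $x^{s}$ $(0\le s\le m-1)$ and the row of $G_{n+1-m,\infty}$ indexed by $x^{t}$ $(0\le t\le n-m)$, their Euclidean inner product equals $\sum_{i=1}^{n}v_i^{2}a_i^{s+t}$ plus the contribution of the last ($\infty$) coordinate. The $\infty$-coordinate of the first row equals $1$ exactly when $s=m-1$ and the second equals $1$ exactly when $t=n-m$, so this term contributes $1$ for the single pair $(s,t)=(m-1,n-m)$ and $0$ otherwise. Since $\sum_i v_i^{2}a_i^{s+t}$ depends only on $l:=s+t$, and since $(m-1,n-m)$ is the unique pair with $s+t=n-1$ (from $t\le n-m$ one gets $s\ge m-1$, forcing $s=m-1$, $t=n-m$), while every $l$ with $0\le l\le n-2$ is realized, the whole system of orthogonality conditions collapses to
\begin{equation*}
\sum_{i=1}^{n}v_i^{2}a_i^{l}=0\ \ (0\le l\le n-2),\qquad \sum_{i=1}^{n}v_i^{2}a_i^{n-1}=-1.
\end{equation*}

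Then I would solve this system in the unknowns $x_i:=v_i^{2}$. The first $n-1$ equations have an $(n-1)\times n$ Vandermonde-type coefficient matrix of rank $n-1$, so their solution space is one-dimensional; by Lemma \ref{y3} the vector $(u_1,\ldots,u_n)$ lies in it, whence $v_i^{2}=c\,u_i$ for a common scalar $c$. Substituting into the last equation and using $\sum_{i}a_i^{n-1}u_i=1$ (Lemma \ref{y3}) forces $c=-1$, i.e. $-u_i=v_i^{2}$ for all $i$. Conversely, if $-u_i=v_i^{2}$, then Lemma \ref{y3} shows both displayed conditions hold, so the two codes are orthogonal and therefore equal. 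For the ``in particular'' clause I would set $n$ odd and $m=\frac{n+1}{2}$, so that $n+1-m=m$ and the established identity becomes $\mathbf{GRS}_{\frac{n+1}{2}}(\overrightarrow{a},\overrightarrow{v},\infty)^{\perp}=\mathbf{GRS}_{\frac{n+1}{2}}(\overrightarrow{a},\overrightarrow{v},\infty)$, i.e. self-duality, recovering Lemma 2.2 of [\ref{Yan}].

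I expect the main obstacle to be the careful bookkeeping of the $\infty$-coordinate: one must verify that among all pairs of generator rows it contributes to exactly one inner product, namely the unique pair $(m-1,n-m)$ of total monomial degree $n-1$, so that the cross-orthogonality reduces precisely to the clean system above. Once this is pinned down, the rest is a routine Vandermonde rank count together with the two values furnished by Lemma \ref{y3}. The hypothesis $1\le m\le\lfloor\frac{n+1}{2}\rfloor$ enters only to guarantee that $\mathbf{GRS}_{n+1-m}(\overrightarrow{a},\overrightarrow{v},\infty)$ is a well-defined code whose dimension $n+1-m$ matches $\dim\mathbf{GRS}_{m}(\overrightarrow{a},\overrightarrow{v},\infty)^{\perp}$.
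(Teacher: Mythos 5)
Your proof is correct and takes essentially the same approach as the paper: the corollary is stated there without a separate proof, and your argument---reducing the code equality to cross-orthogonality of all generator-row pairs, handling the $\infty$-coordinate via the unique pair $(m-1,n-m)$ of total degree $n-1$, and then solving the Vandermonde system with Lemma \ref{y3}---is exactly the cross-code analogue of the paper's own proof of Lemma \ref{eGRS}. One cosmetic note: the paper's ``in particular'' clause writes $\mathbf{GRS}_{\frac{n+1}{2}}(\overrightarrow{a},\overrightarrow{v})$ where the extended code $\mathbf{GRS}_{\frac{n+1}{2}}(\overrightarrow{a},\overrightarrow{v},\infty)$ is meant, a typo that your reading silently and correctly fixes.
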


\section{Main Results}

 \quad\; In this section, we present our constructions of MDS codes with Euclidean hulls of arbitrary dimensions utilizing (extended) GRS codes.

Firstly, we give the definition of almost self-dual code. It is a special case of self-orthogonal code.

\begin{Definition}\label{almost self-dual}
Assume the length of the code $C$ is odd. If $C\subseteq C^\perp$ and $\dim(C^\perp)=\dim(C)+1$, we call $C$ an almost self-dual code.
\end{Definition}

Now we construct MDS codes with Euclidean hulls of arbitrary dimensions via GRS codes.
\begin{theorem}\label{via GRS}
Assume $1\leq m\leq \lfloor\frac{n}{2}\rfloor$ and $q>3$. Suppose $\mathbf{GRS}_{m}(\overrightarrow{a},\overrightarrow{v})$ is self-orthogonal
(i.e. $\mathbf{GRS}_{m}(\overrightarrow{a},\overrightarrow{v})\subseteq \mathbf{GRS}_{m}(\overrightarrow{a},\overrightarrow{v})^\perp$)
with $\overrightarrow{a}=(a_{1},a_{2},\ldots,a_{n})$ and $\overrightarrow{v}=(v_{1},v_{2},\ldots,v_{n})$. For any
$0\leq l\leq k\leq m \leq \lfloor\frac{n}{2}\rfloor$, there exists a $q$-ary $[n,k]$ MDS code $C$ with $\dim(Hull(C))=l$.
\end{theorem}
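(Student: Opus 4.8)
The plan is to realize $C$ as a GRS code $\mathbf{GRS}_{k}(\overrightarrow{a},\overrightarrow{w})$ sharing the evaluation points $\overrightarrow{a}$ of the given self-orthogonal code, and to choose a new multiplier $\overrightarrow{w}=(w_1,\dots,w_n)\in(\F_q^*)^n$ so that the Euclidean hull has exactly the prescribed dimension $l$. Since any such code with distinct $a_i$ and nonzero $w_i$ is automatically a $q$-ary $[n,k]$ MDS code, the entire content lies in controlling $\dim(Hull(C))$. The two conditions I must arrange on $\overrightarrow{w}$ are: (i) a degree condition that pins the hull dimension, and (ii) a quadratic-residue condition ensuring each $w_i$ actually lives in $\F_q^*$.

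The first step is to translate the hull into a degree condition via Lemma \ref{y1}. Writing $\mu(a_i):=w_i^{2}u_i^{-1}$ for each $i$ and letting $\mu(x)$ be the interpolating polynomial realizing these values, a codeword $(w_1f(a_1),\dots,w_nf(a_n))$ with $\deg f\le k-1$ lies in $Hull(C)$ if and only if there is $g$ with $\deg g\le n-k-1$ and $\mu(a_i)f(a_i)=g(a_i)$ for all $i$, i.e. $\deg\big(\mu f \bmod P\big)\le n-k-1$, where $P(x)=\prod_{i=1}^{n}(x-a_i)$. I will choose $\mu$ with $\deg\mu=n-k-l$ exactly. Then for every $f$ with $\deg f\le k-1$ one has $\deg(\mu f)\le n-l-1\le n-1$, so no reduction modulo $P$ occurs and the hull condition collapses to $\deg(\mu f)=\deg\mu+\deg f\le n-k-1$, that is $\deg f\le l-1$. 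Hence
\begin{equation*}
Hull(C)=\{(w_1f(a_1),\dots,w_nf(a_n)):\deg f\le l-1\},
\end{equation*}
which has dimension exactly $l$, as required.

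The second step is to produce a multiplier $\overrightarrow{w}$ realizing a target $\mu$ of degree $n-k-l$ with $\mu(a_i)u_i\in QR_q$ for every $i$ (so that $w_i=\sqrt{\mu(a_i)u_i}$ exists in $\F_q^*$). Here the hypothesis that $\mathbf{GRS}_{m}(\overrightarrow{a},\overrightarrow{v})$ is self-orthogonal is exactly what supplies a base of squares: by Lemma \ref{GRS}, $v_i^{2}=\lambda(a_i)u_i$ with $\deg\lambda\le n-2m$, so $\lambda(a_i)u_i$ is already a nonzero square. Factoring $\mu=\lambda\rho$ reduces condition (ii) to requiring $\rho(a_i)\in QR_q$ for all $i$, in which case $w_i=v_i\sqrt{\rho(a_i)}$. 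The required degree is $\deg\rho=n-k-l-\deg\lambda$, and this is non-negative because $\deg\lambda\le n-2m$ together with $l\le k\le m\le\lfloor n/2\rfloor$ gives $n-k-l-\deg\lambda\ge 2m-k-l\ge 0$.

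The main obstacle is the simultaneous satisfaction of (i) and (ii): constructing $\rho$ of the \emph{exact} degree $n-k-l-\deg\lambda$ whose values $\rho(a_1),\dots,\rho(a_n)$ are all nonzero squares. When this degree is even I would simply take $\rho=h^{2}$ with $h$ chosen to have all its roots outside $\{a_1,\dots,a_n\}$, which forces every $\rho(a_i)\in QR_q$. The delicate case is the small-odd-degree regime, where prescribing square values at all $n$ points by a low-degree polynomial is over-determined; I expect this to be where the real work lies. I would handle it by a quadratic-character counting argument valid for $q>3$ (showing the number of admissible $\rho$ of the prescribed degree is positive), or, alternatively, by dropping the ``no reduction'' simplification, permitting $\deg\mu>n-k$, and instead computing the rank of the induced $k\times k$ system governing $\deg(\mu f\bmod P)\le n-k-1$; tuning that rank to $k-l$ is the step I anticipate requiring the most care.
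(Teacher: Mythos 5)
Your first step is correct and is a clean way to organize the problem: writing $\mu(a_i)=w_i^{2}u_i^{-1}$, Lemma \ref{y1} turns hull membership into the condition $\deg(\mu f \bmod P)\le n-k-1$, and an interpolating $\mu$ of degree exactly $n-k-l$ whose values satisfy $\mu(a_i)u_i\in QR_q$ would indeed give $\dim(Hull(C))=l$. The gap is in your second step, and it is not a removable technicality: for parameters admitted by the theorem, the polynomial $\rho$ you need provably does not exist, so no counting argument can close it. Concretely, take $q\ge 5$ odd, $n=q$, $\overrightarrow{a}$ an enumeration of $\F_q$, $\overrightarrow{v}=(1,\dots,1)$ and $m=\frac{q-1}{2}$. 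Since $\sum_{x\in\F_q}x^{j}=0$ for $0\le j\le q-2$, this $\mathbf{GRS}_{m}(\overrightarrow{a},\overrightarrow{v})$ is self-orthogonal, with $u_i=-1$ and $\lambda\equiv-1$ constant. Now take $k=m$ and $l=k-1$: your framework demands $\rho$ of degree exactly $2$ with $\rho(x)$ a nonzero square for every $x\in\F_q$. Such a $\rho$ must be a nonzero constant times an irreducible quadratic, whence $\sum_{x\in\F_q}\chi\bigl(\rho(x)\bigr)=\pm 1$ by the standard evaluation $\sum_{x\in\F_q}\chi(ax^{2}+bx+c)=-\chi(a)$ for $b^{2}-4ac\neq 0$ (here $\chi$ is the quadratic character), contradicting the value $q$ that ``all values are nonzero squares'' would force. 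Likewise $l=k-2$ demands degree exactly $3$, impossible by the Weil bound $\bigl|\sum_{x}\chi(\rho(x))\bigr|\le 2\sqrt{q}<q$. Note that exactness of the degree is forced whenever $l<k$: inside the no-reduction regime $\deg\mu\le n-k$, the hull dimension is $\min(k,\,n-k-\deg\mu)$. So the only escape is your last-mentioned fallback --- $\deg\mu>n-k$ with genuine reduction modulo $P$ --- which is precisely the part you did not carry out.

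That fallback is where the paper's proof lives, and its key idea is to avoid extracting square roots altogether. Keep the original multipliers and rescale the first $s:=k-l$ of them: set $\overrightarrow{v}'=(\alpha v_1,\dots,\alpha v_s,v_{s+1},\dots,v_n)$ with $\alpha\in\F_q^{*}$, $\alpha^{2}\neq 1$ (the only place $q>3$ is used), and $C=\mathbf{GRS}_{k}(\overrightarrow{a},\overrightarrow{v}')$; every entry of $\overrightarrow{v}'$ is trivially in $\F_q^{*}$. In your language, $\mu$ is now $\lambda$ times the two-valued function equal to $\alpha^{2}$ on $a_1,\dots,a_s$ and $1$ elsewhere, which has no useful degree bound; instead one argues directly. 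For a hull element, $g$ must agree with $\lambda f$ at the last $n-s\ge n-k$ points; since both polynomials have degree at most $n-k-1$ (using $\deg\lambda\le n-2m$ and $k\le m$), they coincide identically; substituting $g=\lambda f$ into the first $s$ coordinates gives $\alpha^{2}\lambda(a_i)u_if(a_i)=\lambda(a_i)u_if(a_i)$, hence $f(a_i)=0$ for $1\le i\le s$. Thus the hull is exactly $\bigl\{f=h\prod_{i=1}^{s}(x-a_i):\deg h\le k-1-s\bigr\}$, of dimension $k-s=l$. Forcing vanishing at $s$ prescribed points, rather than imposing a degree cap on $\mu$, is the idea your proposal is missing.
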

\begin{proof}
From $\mathbf{GRS}_{m}(\overrightarrow{a},\overrightarrow{v})\subseteq \mathbf{GRS}_{m}(\overrightarrow{a},\overrightarrow{v})^\perp$ and Lemma \ref{GRS},
\begin{center}
$v_i^2=\lambda(a_i)u_i\neq 0(1\leq i\leq n),$
\end{center}
where $\lambda(a_i)=\lambda_0+\lambda_1a_i+\ldots+\lambda_{n-2m}a_i^{n-2m}$ with $\lambda_{h}\in \mathbb{F}_{q}(0\leq h\leq n-2m)$.
Denote by $s:=k-l$, $\overrightarrow{a}=(a_{1},a_{2},\ldots,a_{n})$ and
$\overrightarrow{v}^{'}=(\alpha v_{1},\alpha v_{2},\ldots,\alpha v_{s},v_{s+1},\ldots,v_{n})$, where $\alpha\in\mathbb{F}_{q}^*$ and $\alpha^2\neq 1$.
For $C=\mathbf{GRS}_k(\overrightarrow{a},\overrightarrow{v}^{'})$ and any
\begin{center}
$\overrightarrow{c}=(\alpha v_{1}f(a_{1}),\ldots,\alpha v_{s}f(a_{s}),v_{s+1}f(a_{s+1}),\ldots,v_{n}f(a_{n}))\in Hull(C)$
\end{center}
with $\deg(f(x))\leq k-1$, according to Lemma \ref{y1}, there exists a polynomial $g(x)\in \mathbb{F}_{q}[x]$ with $\deg(g(x))\leq n-k-1$ such that
\begin{equation*}
\begin{aligned}
&(\alpha^{2} v_{1}^{2}f(a_{1}),\ldots,\alpha^{2} v_{s}^{2}f(a_{s}),v_{s+1}^{2}f(a_{s+1}),\ldots,v_{n}^{2}f(a_{n}))&\\
=&(u_{1}g(a_{1}),\ldots,u_{s}g(a_{s}),u_{s+1}g(a_{s+1}),\ldots,u_{n}g(a_{n})).&
\end{aligned}
\end{equation*}
Due to $v_i^2=\lambda(a_i)u_i$($1\leq i\leq n$),
\begin{equation}\label{equ1}
\begin{aligned}
&(\alpha^{2}\lambda(a_1)u_{1}f(a_{1}),\ldots,\alpha^{2}\lambda(a_s)u_{s}f(a_{s}),\lambda(a_{s+1})u_{s+1}f(a_{s+1}),\ldots,\lambda(a_n)u_{n}f(a_{n}))&\\
=&(u_{1}g(a_{1}),\ldots,u_{s}g(a_{s}),u_{s+1}g(a_{s+1}),\ldots,u_{n}g(a_{n})).&
\end{aligned}
\end{equation}
When $s+1 \leq  i \leq  n$, we get $\lambda(a_i)f(a_i)=g(a_i)$. Note that $\deg(\lambda(x)f(x))\leq n-2m+(k-1)\leq n-2k+(k-1)=n-k-1$ and
$\deg(g(x))\leq n-k-1$. It deduces that $\lambda(x)f(x)=g(x)$ from $n-s\geq n-k$. When $1\leq i\leq s$, it implies
\begin{center}
$\alpha^{2}\lambda(a_i)u_{i}f(a_{i})=u_{i}g(a_{i})=u_{i}\lambda(a_i)f(a_{i})$.
\end{center}
We derive that $f(a_{i})=0$ ($1\leq i \leq s$) by $\alpha^{2}\neq 1$ and $\lambda(a_i) u_{i}\neq 0$. So
\begin{equation*}
f(x)=h(x)\prod_{i=1}^{s}(x-a_{i}),
\end{equation*}
for some $h(x)\in \mathbb{F}_{q}[x]$ with $\deg(h(x))\leq k-1-s$. It follows that $\dim(Hull(C))\leq k-s$.

Conversely, put $f(x)=h(x)\prod\limits_{i=1}^{s}(x-a_{i})$, where $h(x)\in \mathbb{F}_{q}[x]$ and $\deg(h(x))\leq k-1-s$.
Assume that $g(x)=\lambda(x)f(x)$, which yields $\deg(g(x))\leq n-k-1$. Then
\begin{equation*}
\begin{aligned}
&(\alpha^{2}\lambda(a_1)u_{1}f(a_{1}),\ldots,\alpha^{2}\lambda(a_s)u_{s}f(a_{s}),\lambda(a_{s+1})u_{s+1}f(a_{s+1}),\ldots,\lambda(a_n)u_{n}f(a_{n}))&\\
=&(u_{1}g(a_{1}),\ldots,u_{s}g(a_{s}),u_{s+1}g(a_{s+1}),\ldots,u_{n}g(a_{n})).&
\end{aligned}
\end{equation*}
According to Lemma \ref{y1},
\begin{center}
$(\alpha v_{1}f(a_{1}),\ldots,\alpha v_{s}f(a_{s}), v_{s+1}f(a_{s+1}),\ldots,  v_{n}f(a_{n}))\in Hull(C)$.
\end{center}
Therefore, $\dim(Hull(C))\geq k-s$.

As a result, $\dim(Hull(C))= k-s=l$.
\end{proof}

As a corollary, the following result can be deduced by choosing
$\mathbf{GRS}_{m}(\overrightarrow{a},\overrightarrow{v})^\perp=\mathbf{GRS}_{n-m}(\overrightarrow{a},\overrightarrow{v})$.
\begin{corollary}\label{cor GRS}
Assume $1\leq m\leq \lfloor\frac{n}{2}\rfloor$ and $q>3$. Suppose
$$\mathbf{GRS}_{m}(\overrightarrow{a},\overrightarrow{v})^\perp=\mathbf{GRS}_{n-m}(\overrightarrow{a},\overrightarrow{v})$$
with $\overrightarrow{a}=(a_{1},a_{2},\ldots,a_{n})$ and $\overrightarrow{v}=(v_{1},v_{2},\ldots,v_{n})$. For any
$0\leq l\leq k\leq \lfloor\frac{n}{2}\rfloor$, there exists a $q$-ary $[n,k]$ MDS code $C$ with $\dim(Hull(C))=l$.
\end{corollary}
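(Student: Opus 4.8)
The plan is to reduce everything to Theorem \ref{via GRS} by exploiting the rigidity that the dual-containing condition imposes on $\overrightarrow{v}$. First I would invoke Corollary \ref{cor1}: the hypothesis $\mathbf{GRS}_{m}(\overrightarrow{a},\overrightarrow{v})^\perp=\mathbf{GRS}_{n-m}(\overrightarrow{a},\overrightarrow{v})$ is equivalent to the existence of a single nonzero scalar $\lambda\in\mathbb{F}_q^*$ with $v_i^2=\lambda u_i$ for every $1\le i\le n$. The essential point is that this $\lambda$ does not depend on $i$; it is a \emph{constant} multiplier rather than an evaluation of a genuinely nonconstant polynomial, as would be permitted under mere self-orthogonality.

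Next I would feed this back into the self-orthogonality criterion of Lemma \ref{GRS}. For an arbitrary dimension $k$ with $1\le k\le\lfloor\frac{n}{2}\rfloor$, the constant polynomial $\lambda(x)=\lambda$ has degree $0\le n-2k$, so the relation $v_i^2=\lambda(a_i)u_i\neq 0$ is exactly the condition of Lemma \ref{GRS} for the parameter $k$. Hence $\mathbf{GRS}_{k}(\overrightarrow{a},\overrightarrow{v})$ is itself Euclidean self-orthogonal, and this holds for \emph{every} such $k$ simultaneously. This is the crux of the argument: a constant $\lambda$ witnesses self-orthogonality at all admissible dimensions at once, whereas a nonconstant $\lambda(x)$ of degree $n-2m$ only witnesses it up to dimension $m$.

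With this in hand the corollary follows by applying Theorem \ref{via GRS} with its parameter $m$ taken to be $k$. Given any $0\le l\le k\le\lfloor\frac{n}{2}\rfloor$ (the case $k=0$ being the trivial zero code, whose hull has dimension $0=l$), the self-orthogonality of $\mathbf{GRS}_{k}(\overrightarrow{a},\overrightarrow{v})$ places us squarely in the hypotheses of the theorem with the admissible chain $0\le l\le k\le m=k\le\lfloor\frac{n}{2}\rfloor$. The theorem then directly produces a $q$-ary $[n,k]$ MDS code $C$ with $\dim(Hull(C))=l$, which is exactly the assertion.

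The only obstacle here is conceptual rather than computational: one must recognize that the dual-containing hypothesis is strictly stronger than self-orthogonality at level $m$, precisely because it forces $\lambda(x)$ to be a constant and thereby propagates self-orthogonality to all dimensions $k\le\lfloor\frac{n}{2}\rfloor$. Once that observation is made, the degree bookkeeping inside the proof of Theorem \ref{via GRS}---notably the inequality $\deg(\lambda(x)f(x))\le n-k-1$, which originally relied on $k\le m$---goes through verbatim with $m=k$ (indeed with $\lambda(x)$ constant it reduces to $\deg(f(x))\le k-1\le n-k-1$, valid since $2k\le n$), so no new estimate is required.
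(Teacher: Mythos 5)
Your proposal is correct and takes essentially the same route as the paper: the paper deduces the corollary from Theorem \ref{via GRS} in a single sentence, and the details it leaves implicit are precisely the ones you supply, namely that Corollary \ref{cor1} forces the multiplier $\lambda$ to be a \emph{constant}, which by Lemma \ref{GRS} makes $\mathbf{GRS}_{k}(\overrightarrow{a},\overrightarrow{v})$ self-orthogonal for every $k\leq\lfloor\frac{n}{2}\rfloor$ (not merely for $k\leq m$), so that Theorem \ref{via GRS} can be invoked with its parameter $m$ set to $k$ (equivalently, to $\lfloor\frac{n}{2}\rfloor$). Your degree check confirming that the theorem's bookkeeping survives the substitution $m=k$ is accurate, so there is no gap.
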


\begin{remark}
Both Theorem 7 of [\ref{LCC}] and Theorem 1(i) of [\ref{FFLZ}] are special cases of Corollary \ref{cor GRS}.
\end{remark}

The above result is on the constructions of MDS codes with Euclidean hulls of arbitrary dimensions utilizing GRS codes. Afterwards, we present
constructions utilizing extended GRS codes.

\begin{theorem}\label{via eGRS}
Assume $1\leq m \leq \lfloor\frac{n+1}{2}\rfloor$, $q>3$ and $n<q$. Suppose \,$\mathbf{GRS}_{m}(\overrightarrow{a},\overrightarrow{v},\infty)$ is self-orthogonal with $\overrightarrow{a}=(a_{1},a_{2},\ldots,a_{n})$ and $\overrightarrow{v}=(v_{1},v_{2},\ldots,v_{n})$. For any
$0\leq l\leq k\leq m \leq \lfloor\frac{n+1}{2}\rfloor$, there exists a $q$-ary $[n+1,k]$ MDS code $C$ with $\dim(Hull(C))=l$.
\end{theorem}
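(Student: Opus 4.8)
My plan is to mirror the proof of Theorem~\ref{via GRS}, but to first correct for the two structural differences that the extra coordinate introduces. By Lemma~\ref{eGRS}, self-orthogonality of $\mathbf{GRS}_m(\overrightarrow{a},\overrightarrow{v},\infty)$ means $v_i^2=\lambda(a_i)u_i\neq 0$ with $\lambda(x)=\lambda_0+\cdots+\lambda_{n-2m}x^{n-2m}-x^{n-2m+1}$, a polynomial of degree exactly $n-2m+1$ and leading coefficient $-1$. Compared with the GRS case, $\deg\lambda$ is one larger and, by Lemma~\ref{y2}, the auxiliary polynomial $g$ is allowed degree $\le n-k$ (again one larger), while membership in the hull now carries the extra constraint $f_{k-1}=-g_{n-k}$. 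The whole argument hinges on whether this extra constraint exactly compensates these two extra degrees.

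The compensation is clean precisely when $k=m$: then $\deg(\lambda f)\le(n-2m+1)+(k-1)=n-k$ with the coefficient of $x^{n-k}$ equal to $-f_{k-1}$, so for $P(x):=g(x)-\lambda(x)f(x)$ the constraint $f_{k-1}=-g_{n-k}$ forces the top coefficient $P_{n-k}=g_{n-k}+f_{k-1}=0$, i.e. $\deg P\le n-k-1$. To make this available for every $k\le m$, I would first \emph{inflate} the given code to a self-orthogonal extended GRS code of dimension exactly $k$. Since $n<q$ there is some $c\in\F_q\setminus\{a_1,\ldots,a_n\}$; put $\nu(x)=(x-c)^{\,m-k}$ and $w_i=\nu(a_i)v_i\in\F_q^*$. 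Then $w_i^2=\mu(a_i)u_i$ with $\mu(x)=\lambda(x)(x-c)^{2(m-k)}$ of degree $n-2k+1$ and leading coefficient $-1$, so by Lemma~\ref{eGRS} the code $\mathbf{GRS}_k(\overrightarrow{a},\overrightarrow{w},\infty)$ is self-orthogonal. This inflation step, which has no analogue in the GRS proof, is exactly what uses the hypothesis $n<q$.

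With the dimension realigned I would repeat the scaling device of Theorem~\ref{via GRS}. Set $s:=k-l$, choose $\alpha\in\F_q^*$ with $\alpha^2\neq1$ (possible as $q>3$), and let $\overrightarrow{w}^{'}=(\alpha w_1,\ldots,\alpha w_s,w_{s+1},\ldots,w_n)$, $C=\mathbf{GRS}_k(\overrightarrow{a},\overrightarrow{w}^{'},\infty)$, an $[n+1,k]$ MDS code. For $\overrightarrow{c}\in Hull(C)$, Lemma~\ref{y2} supplies $g$ with $\deg g\le n-k$, satisfying $\mu(a_i)f(a_i)=g(a_i)$ for $s+1\le i\le n$ (and $\alpha^2\mu(a_i)f(a_i)=g(a_i)$ on the first $s$ coordinates), together with $f_{k-1}=-g_{n-k}$. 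Forming $P=g-\mu f$ as above gives $P_{n-k}=0$, and $P$ has the $n-s$ roots $a_{s+1},\ldots,a_n$ with $n-s\ge n-k>n-k-1\ge\deg P$, so $P\equiv0$ and $g=\mu f$. On the scaled coordinates this yields $(\alpha^2-1)\mu(a_i)f(a_i)=0$, hence $f(a_i)=0$ for $1\le i\le s$ since $\alpha^2\neq1$ and $\mu(a_i)\neq0$; thus $f=h(x)\prod_{i=1}^s(x-a_i)$ and $\dim Hull(C)\le k-s=l$. Conversely, every such $f$ together with $g=\mu f$ satisfies all the conditions of Lemma~\ref{y2} (the constraint $f_{k-1}=-g_{n-k}$ holding automatically because $[\mu f]_{n-k}=-f_{k-1}$), giving the reverse inequality and $\dim Hull(C)=l$.

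The genuine obstacle I expect is the extra coordinate. It is tempting to scale $\overrightarrow{v}$ directly without inflating, but when $k<m$ one has $\deg(\lambda f)<n-k$, so the constraint $f_{k-1}=-g_{n-k}$ no longer annihilates the top coefficient of $g-\lambda f$; tracking this carefully shows the naive construction would produce hull dimension $l-1$ rather than $l$. Getting the count right is precisely what forces the preliminary inflation to dimension $k$, equivalently the replacement of $\lambda$ by a degree $n-2k+1$ polynomial $\mu$. The main points to verify with care are therefore that $\mu$ retains leading coefficient $-1$ and degree $n-2k+1$, and that the $w_i$ remain nonzero, both of which are immediate from $\nu=(x-c)^{\,m-k}$ with $c\notin\{a_1,\ldots,a_n\}$.
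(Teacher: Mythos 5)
Your proof is correct and follows essentially the same route as the paper: the paper's construction also multiplies $\overrightarrow{v}$ by $\pi(a_i)$ with $\pi(x)=(x-b)^{m-k}$ (your $\nu$), scales the first $s=k-l$ coordinates by $\alpha$ with $\alpha^2\neq1$, and uses Lemma \ref{y2} to conclude $g=\lambda\pi^2f$ and hence $f(a_i)=0$ for $1\leq i\leq s$. The only differences are presentational: you phrase the multiplication by $\nu$ as an explicit reduction, via Lemma \ref{eGRS}, to a self-orthogonal extended GRS code of dimension exactly $k$, and you replace the paper's two-case analysis (according to whether $f_{k-1}=0$) by the single observation that the coefficient of $x^{n-k}$ in $g-\mu f$ vanishes.
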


\begin{proof}
Since $\mathbf{GRS}_{m}(\overrightarrow{a},\overrightarrow{v},\infty)$ is self-orthogonal and by Lemma \ref{eGRS},
\begin{center}
$v_i^2=\lambda(a_i)u_i\neq 0(1\leq i\leq n),$
\end{center}
where $\lambda(a_i)=\lambda_0+\lambda_1a_i+\ldots+\lambda_{n-2m}a_i^{n-2m}-a_i^{n-2m+1}$ with $\lambda_{h}\in \mathbb{F}_{q}(0\leq h\leq n-2m)$.
Put $\pi(x)=(x-b)^{m-k}$ with some $b\in\F_q\backslash\{a_1,\ldots,a_n\}$. Denote by $s:=k-l$. Choose
\begin{center}
$\overrightarrow{a}=(a_1,\ldots,a_n)$ and
$\overrightarrow{v}^{'}=(\alpha v_{1}\pi(a_1),\alpha v_{2}\pi(a_2),\ldots,\alpha v_{s}\pi(a_s),v_{s+1}\pi(a_{s+1}),\ldots,v_{n}\pi(a_n))$,
\end{center}
where $\alpha\in\mathbb{F}_{q}^*$ with $\alpha^{2}\neq 1$. Set $C:=\mathbf{GRS}_k(\overrightarrow{a},\overrightarrow{v}^{'},\infty)$.
For any
\begin{center}
$\overrightarrow{c}=(\alpha v_{1}\pi(a_1)f(a_{1}),\ldots,\alpha v_{s}\pi(a_s)f(a_{s}),v_{s+1}\pi(a_{s+1})f(a_{s+1}), \ldots,v_{n}\pi(a_n)f(a_{n}),f_{k-1})\in Hull(C)$
\end{center}
with $\deg(f(x))\leq k-1$, by Lemma \ref{y2}, there exists a polynomial $g(x)\in \mathbb{F}_{q}[x]$ with $\deg(g(x))\leq n-k$ such that
\begin{equation*}
\begin{aligned}
&(\alpha^{2} v_{1}^{2}\pi^2(a_1)f(a_{1}),\ldots,\alpha^{2} v_{s}^{2}\pi^2(a_s)f(a_{s}),v_{s+1}^{2}\pi^2(a_{s+1})f(a_{s+1}),\ldots,v_{n}^{2}\pi^2(a_{n})f(a_{n}),f_{k-1})&\\
=&(u_{1}g(a_{1}),\ldots,u_{s}g(a_{s}),u_{s+1}g(a_{s+1}),\ldots,u_{n}g(a_{n}),-g_{n-k}).&
\end{aligned}
\end{equation*}
From $v_i^2=\lambda(a_i)u_i$, we derive
\begin{equation}\label{equ2}
\begin{aligned}
&(\alpha^{2}\lambda(a_1)u_{1}\pi^2(a_1)f(a_{1}),\ldots,\alpha^{2}\lambda(a_s)u_{s}\pi^2(a_s)f(a_{s}),\lambda(a_{s+1})u_{s+1}\pi^2(a_{s+1})f(a_{s+1}),\ldots, &\\ &\lambda(a_n)u_{n}\pi^2(a_{n})f(a_{n}),f_{k-1})=(u_{1}g(a_{1}),\ldots,u_{s}g(a_{s}),u_{s+1}g(a_{s+1}),\ldots,u_{n}g(a_{n}),-g_{n-k}).&
\end{aligned}
\end{equation}
We claim that $\lambda(x)\pi^2(x)f(x)=g(x)$ in the following:
\begin{itemize}
  \item {\bf Case 1:} $-f_{k-1}=g_{n-k}=0$. It follows from (\ref{equ2}) that $\lambda(a_i)\pi^2(a_i)f(a_i)=g(a_i)$ for $s+1\leq i \leq n$. Note that
  $\deg(\lambda(x)\pi^2(x)f(x))\leq n-2m+1+2m-2k+k-2=n-k-1$ and $\deg(g(x))\leq n-k-1$. From $n-s\geq n-k$, it follows that $\lambda(x)\pi^2(x)f(x)=g(x)$.
  \item {\bf Case 2:} $-f_{k-1}=g_{n-k}\neq 0$. In this case, $\deg(\lambda(x)\pi^2(x)f(x))=n-2m+1+2m-2k+k-1=n-k$ and $\deg(g(x))=n-k$. Then
  $\deg(\lambda(x)\pi^2(x)f(x)-g(x))\leq n-k-1$. From (\ref{equ2}), $\lambda(a_i)\pi^2(a_i)f(a_i)=g(a_i)$ for $s+1\leq i \leq n$. Since $n-s\geq n-k$,
  then $\lambda(x)\pi^2(x)f(x)=g(x)$.
\end{itemize}
Comparing the beginning $s$ coordinates on both sides of (\ref{equ2}),
\begin{center}
$\alpha^{2}\lambda(a_i)u_{i}\pi^2(a_i)f(a_{i})=u_{i}g(a_{i})=u_{i}\lambda(a_i)\pi^2(a_i)f(a_{i})$.
\end{center}
We derive that $f(a_{i})=0$ ($1\leq i \leq s$) by $\alpha^{2}\neq 1$ and $\lambda(a_i) u_{i}\pi(a_i)\neq 0$. So
\begin{equation*}
f(x)=h(x)\prod_{i=1}^{s}(x-a_{i}),
\end{equation*}
for some $h(x)\in \mathbb{F}_{q}[x]$ with $\deg(h(x))\leq k-s-1$. It follows that $\dim(Hull(C))\leq k-s$.

Conversely, set $f(x)=h(x)\prod\limits_{i=1}^{s}(x-a_{i})$, where $h(x)\in \mathbb{F}_{q}[x]$ and $\deg(h(x))\leq k-1-s$. Assume that
$g(x)=\lambda(x)\pi^2(x)f(x)$, which implies $\deg(g(x))\leq n-k$. Then
\begin{equation*}
\begin{aligned}
&(\alpha^{2}\lambda(a_1)u_{1}\pi^2(a_1)f(a_{1}),\ldots,\alpha^{2}\lambda(a_s)u_{s}\pi^2(a_s)f(a_{s}),\lambda(a_{s+1})u_{s+1}\pi^2(a_{s+1})f(a_{s+1}),\ldots,&\\
&\lambda(a_n)u_{n}\pi^2(a_n)f(a_{n}),f_{k-1})=(u_{1}g(a_{1}),\ldots,u_{s}g(a_{s}),u_{s+1}g(a_{s+1}),\ldots,u_{n}g(a_{n}),-g_{n-k}).&
\end{aligned}
\end{equation*}
According to Lemma \ref{y2},
\begin{center}
$(\alpha v_{1}\pi(a_1)f(a_{1}),\ldots,\alpha v_{s}\pi(a_s)f(a_{s}), v_{s+1}\pi(a_{s+1})f(a_{s+1}),\ldots,  v_{n}\pi(a_n)f(a_{n}),f_{k-1})\in Hull(C)$.
\end{center}
Therefore, $\dim(Hull(C))\geq k-s$.

Consequently, $\dim(Hull(C))= k-s=l$.
\end{proof}

As a corollary of this theorem, the following result can be derived directly by choosing self-dual code
$\mathbf{GRS}_{\frac{n+1}{2}}(\overrightarrow{a},\overrightarrow{v},\infty)$ with $n$ odd
(self-dual code $\mathbf{GRS}_{\frac{n}{2}}(\overrightarrow{a},\overrightarrow{v})$ with $n$ even, respectively).
\begin{corollary}\label{cor eGRS}
(i). Assume $n$ is odd, $q>3$ and $n<q$. Suppose \,$\mathbf{GRS}_{\frac{n+1}{2}}(\overrightarrow{a},\overrightarrow{v},\infty)$ is self-dual with
$\overrightarrow{a}=(a_{1},a_{2},\ldots,a_{n})$ and $\overrightarrow{v}=(v_{1},v_{2},\ldots,v_{n})$ . For any $0\leq l\leq k \leq \frac{n+1}{2}$,
 there exists a $q$-ary $[n+1,k]$ MDS code $C$ with $\dim(Hull(C))=l$.

(ii). Assume $n$ is even and $q>3$. Let $\mathbf{GRS}_{\frac{n}{2}}(\overrightarrow{a},\overrightarrow{v})$ be self-dual with
$\overrightarrow{a}=(a_{1},a_{2},\ldots,a_{n})$ and $\overrightarrow{v}=(v_{1},v_{2},\ldots,v_{n})$ . For any $1\leq k \leq \frac{n}{2}$ and
$0\leq l\leq k-1$, there exists a $q$-ary $[n+1,k]$ MDS code $C$ with $\dim(Hull(C))=l$.
\end{corollary}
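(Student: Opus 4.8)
The plan is to read off part (i) at once and to obtain part (ii) by re-running the proof of Theorem \ref{via eGRS} with the degree bookkeeping adjusted to a self-dual (rather than merely self-orthogonal) input. For (i), note that a self-dual code is in particular self-orthogonal, so $\mathbf{GRS}_{\frac{n+1}{2}}(\overrightarrow{a},\overrightarrow{v},\infty)$ meets the hypothesis of Theorem \ref{via eGRS} with $m=\frac{n+1}{2}=\lfloor\frac{n+1}{2}\rfloor$. Feeding this code into Theorem \ref{via eGRS} yields, for every $0\le l\le k\le \frac{n+1}{2}$, a $q$-ary $[n+1,k]$ MDS code $C$ with $\dim(Hull(C))=l$, which is exactly (i); no further work is needed.

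For (ii) I would first apply Corollary \ref{cor1} to the self-dual code $\mathbf{GRS}_{\frac{n}{2}}(\overrightarrow{a},\overrightarrow{v})$ to get a \emph{constant} $\lambda\in\mathbb{F}_q^{*}$ with $v_i^{2}=\lambda u_i$ for all $i$. Then, imitating Theorem \ref{via eGRS}, set $s:=k-l$, choose $b\in\mathbb{F}_q\setminus\{a_1,\ldots,a_n\}$ and $\pi(x)=(x-b)^{\frac{n}{2}-k}$, pick $\alpha\in\mathbb{F}_q^{*}$ with $\alpha^{2}\ne 1$, and define $C:=\mathbf{GRS}_k(\overrightarrow{a},\overrightarrow{v}',\infty)$, where $\overrightarrow{v}'$ scales the first $s$ coordinates of $(v_1\pi(a_1),\ldots,v_n\pi(a_n))$ by $\alpha$. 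The role played in Theorem \ref{via eGRS} by the degree-$(n-2m+1)$ polynomial $\lambda(x)$ is here played by the constant $\lambda$. Using Lemma \ref{y2}, a hull codeword attached to $f$ with $\deg f\le k-1$ forces $G(x):=\lambda\pi^{2}(x)f(x)$ and $g(x)$ to agree at $a_{s+1},\ldots,a_n$; once $G=g$ is known as polynomials, comparing the first $s$ coordinates and using $\alpha^{2}\ne 1$, $\lambda\ne 0$, $\pi(a_i)\ne 0$ gives $f(a_i)=0$ for $i\le s$, so $f=h(x)\prod_{i=1}^{s}(x-a_i)$ and $\dim(Hull(C))=k-s=l$; the reverse inclusion is the same explicit construction (take $g=\lambda\pi^{2}f$) as in Theorem \ref{via eGRS}.

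The hard part, and the reason the range shrinks to $0\le l\le k-1$, is establishing $G=g$. Because $\lambda$ is constant, $\deg G=\deg(\lambda\pi^{2}f)\le n-k-1$, strictly below the bound $n-k$ on $\deg g$; in particular the leading-term cancellation that let Theorem \ref{via eGRS} force $G=g$ in the case $g_{n-k}\ne 0$ no longer occurs, so I would split on $g_{n-k}$. If $g_{n-k}=0$ then $\deg(G-g)\le n-k-1$ while $G-g$ vanishes at the $n-s\ge n-k$ points $a_{s+1},\ldots,a_n$, hence $G=g$ and the count above closes. If $g_{n-k}\ne 0$ then $\deg(G-g)=n-k$ with $n-s$ prescribed roots, forcing $n-s\le n-k$, i.e. $s\ge k$; combined with $s=k-l\le k$ this can occur only when $s=k$, i.e. $l=0$. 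Thus for $1\le l\le k-1$ only the first case survives and $\dim(Hull(C))=l$ goes through verbatim, while for $l=0$ the residual possibility leaves a homogeneous $k\times k$ system on the coefficients of $f$ (a Vandermonde-type matrix perturbed by a rank-one term in its last column coming from $\prod_{i>k}(x-a_i)$), which one must show is nonsingular so that $f=0$ and $C$ is LCD; this is where $q>3$ enters. Finally, the genuine unattainability of $l=k$ is consistent with Lemma \ref{eGRS}: $\mathbf{GRS}_k(\overrightarrow{a},\overrightarrow{v}',\infty)$ could be self-orthogonal only if $\lambda\pi^{2}(x)$, of even degree $n-2k$ and leading coefficient $\lambda$, agreed at all $a_i$ with a degree-$(n-2k+1)$ polynomial of leading coefficient $-1$, which is impossible.
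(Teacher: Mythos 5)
Your part (i) is correct and is exactly the paper's route: self-dual implies self-orthogonal, so Theorem \ref{via eGRS} with $m=\frac{n+1}{2}$ immediately gives the full range $0\le l\le k\le\frac{n+1}{2}$. Part (ii), however, contains a genuine off-by-one error. Your strategy --- re-running the proof of Theorem \ref{via eGRS} with the constant $\lambda$ supplied by Corollary \ref{cor1}, and noting that $\deg\bigl(\lambda\pi^{2}(x)f(x)\bigr)\le n-k-1$ rules out the case $g_{n-k}\ne 0$ whenever $s<k$ --- is the right one, but you dropped the last-coordinate condition of Lemma \ref{y2}. In the surviving case $g_{n-k}=0$ the lemma also demands $f_{k-1}=-g_{n-k}=0$, an extra independent linear constraint on hull codewords. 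Dually, your converse inclusion is false as stated: if $f=h(x)\prod_{i=1}^{s}(x-a_i)$ with $\deg h=k-1-s$, then $g=\lambda\pi^{2}f$ has $g_{n-k}=0\ne -f_{k-1}$, so that codeword is \emph{not} in the hull. Hence with your choice $s=k-l$ the hull corresponds to $\{f:\deg f\le k-2,\ \prod_{i=1}^{s}(x-a_i)\mid f\}$, of dimension $k-1-s=l-1$, not $l$.

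The repair is to take $s=k-1-l$ instead: then the hull is exactly $\{f:\ f_{k-1}=0,\ f(a_i)=0\ \text{for}\ 1\le i\le s\}$, of dimension $l$, and the requirement $s\ge 0$ is precisely what produces the range $l\le k-1$ in the statement (with constant $\lambda$ the coefficient $f_{k-1}$ always vanishes on the hull, so $l=k$ is unreachable, consistent with your closing remark). This repair also eliminates your problematic $s=k$ treatment of $l=0$, which is just as well, because there you only \emph{assert} that the perturbed Vandermonde system is nonsingular. In fact, with $s=k$ one finds that the hull has dimension $1$ precisely when $\alpha^{2}=1-A$, where $A$ is a constant determined by $\overrightarrow{a}$, $b$ and $\lambda$; avoiding this requires a choice of $\alpha^{2}\notin\{0,1,1-A\}$, and for $q=5$ the only admissible value is $\alpha^{2}=4$, so that route cannot be closed uniformly for all $q>3$. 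With $s=k-1-l$ no such residual case arises and the argument matches what the paper's one-line derivation of (ii) from Theorem \ref{via eGRS} must implicitly carry out.
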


\begin{remark}
As special cases of this result, Theorem 1(ii),(iii) and Theorem 2 of [\ref{FFLZ}] can be deduced directly from Corollary \ref{cor eGRS}.
\end{remark}

The remaining case $q=3$ can be depicted explicitly.
\begin{remark}\label{q=3}

(i). The $3$-ary $[2,1,2]$ MDS code $C$ with generator matrix
\begin{center}
$G_1=\left(
  \begin{array}{cc}
    v_1 & v_2 \\
  \end{array}
\right)$
\end{center}
where $v_1,v_2\in\F_3^*$, has $\dim(Hull(C))=0$.

(ii).  The $3$-ary $[3,1,3]$ MDS code $C$ with generator matrix
\begin{center}
$G'_1=\left(
  \begin{array}{ccc}
    v_1 & v_2 & v_3 \\
  \end{array}
\right)$
\end{center}
where $v_1,v_2,v_3\in\F_3^*$, has $\dim(Hull(C))=1$.

(iii). The $3$-ary $[4,1,4]$ MDS code $C$ with generator matrix
\begin{center}
$G''_{1,\infty}=\left(
  \begin{array}{cccc}
    v_1 & v_2 & v_3 & 1 \\
  \end{array}
\right)$
\end{center}
where $v_1,v_2,v_3\in\F_3^*$, has $\dim(Hull(C))=0$ and the $3$-ary $[4,2,3]$ MDS code $C$ with generator matrix
\begin{center}
$G''_{2,\infty}=\left(
  \begin{array}{cccc}
    v_1 & v_2 & v_3 & 0 \\
    0 & v_2 & -v_3 & 1 \\
  \end{array}
\right)$
\end{center}
where $v_1,v_2,v_3\in\F_3^*$, has $\dim(Hull(C))=2$.  A straightforward calculation shows that there does not exist $3$-ary $[4,2,3]$ code $C$ with $\dim(Hull(C))=1$.
\end{remark}

\section{Examples}

\quad\; Each MDS self-orthogonal (extended) GRS code can be applied to construct MDS codes with arbitrary dimensions of hulls. In this section,
applying Theorems \ref{via GRS} and \ref{via eGRS}, we give some concrete examples on (extended) GRS codes whose dimensions of hulls can be determined.

\begin{example}\label{even}
Let $q=r^2$, where $r$ is an odd prime power. Suppose $m\mid q-1$. For $1\leq t\leq \frac{r+1}{\gcd(r+1,m)}$, assume $n=tm$ is even.

(i). If $\frac{q-1}{m}$ is even, then for any $1\leq k \leq \frac{n}{2}$ and \,$0\leq l\leq k$, there exists a $q$-ary $[n,k]$ MDS code $C$
with $\dim(Hull(C))=l$.

(ii). If $\frac{q-1}{m}$ is even, then for any $1\leq k \leq \frac{n-1}{2}$ and \,$0\leq l\leq k-1$, there exists a $q$-ary $[n+1,k]$ MDS code $C$
with $\dim(Hull(C))=l$.

(iii). For any $1\leq k \leq \frac{n}{2}$ and \,$0\leq l\leq k$, there exists a $q$-ary $[n+1,k]$ MDS code $C$ with $\dim(Hull(C))=l$,
except the case that $t$ is even, $m$ is even and $r\equiv1\,(\mathrm{mod}\,4)$.

(iv). For any $1\leq k \leq \frac{n+2}{2}$ and \,$0\leq l\leq k$, there exists a $q$-ary $[n+2,k]$ MDS code $C$ with $\dim(Hull(C))=l$,
except the case that $t$ is even, $m$ is even and $r\equiv1\,(\mathrm{mod}\,4)$.
\end{example}

\begin{proof}
(i). Let $\alpha$ be a primitive $m$-th root of unity in $\F_q$ and $S=\langle\beta\rangle$ be the cyclic group of order $r+1$. By the second fundamental
theorem of group homomorphism,
$$S\big/(S\cap\langle\alpha\rangle)\simeq(S\times\langle\alpha\rangle)\big/\langle\alpha\rangle\leq\F_q^*\big/\langle\alpha\rangle.$$
Let $B=\{\beta^{\mu_1},\ldots,\beta^{\mu_t}\}$ be a set of coset representatives of $(S\times\langle\alpha\rangle)\big/\langle\alpha\rangle$ with
$0\leq \mu_1<\cdots<\mu_t<r+1$. Put $\mu=\mu_1+\cdots+\mu_t$ and
$A=\{\alpha\beta^{\mu_1},\ldots,\alpha^m\beta^{\mu_1},\alpha\beta^{\mu_2},\ldots,\alpha^m\beta^{\mu_2},\ldots,\alpha\beta^{\mu_t},
\ldots,\alpha^m\beta^{\mu_t}\}$. Denote by $a_{c+(j-1)m}:=\alpha^c\beta^{\mu_j}$ with $1\leq c\leq m$, $1\leq j\leq t$ and
$\overrightarrow{a}=(a_{1},\ldots,a_{n})$. Let $i=c+(j-1)m$ and $\lambda=g^{\frac{r+1}{2}\cdot(t-1)-m\mu}$, where $1\leq i\leq n$ and $g$ is a
generator of $\F_q^*$. Then by [\ref{FLLL}], we know $\lambda\cdot\prod\limits_{z\neq i,\,\,z=1}^n\left(a_i-a_z\right)\in QR_q$.
Set $v_i^2=\left(\lambda\cdot\prod\limits_{z\neq i,\,\,z=1}^n\left(a_i-a_z\right)\right)^{-1}$ and
$\overrightarrow{v}=(v_1,\ldots,v_n)$. Then $\mathbf{GRS}_{\frac{n}{2}}(\overrightarrow{a},\overrightarrow{v})$ is MDS self-dual.
According to Theorem \ref{via GRS}, we complete the proof.

(ii). With the same process of proof as (i) and Theorem \ref{via eGRS}, we can obtain the result.

(iii). Similarly as (i), choose
$A=\{\alpha\beta^{\mu_1},\ldots,\alpha^m\beta^{\mu_1},\alpha\beta^{\mu_2},\ldots,\alpha^m\beta^{\mu_2},\ldots,\alpha\beta^{\mu_t},
\ldots,\alpha^m\beta^{\mu_t},0\}.$ Denote by $a_{c+(j-1)m}:=\alpha^c\beta^{\mu_j}$, $a_{n+1}:=0$ and
$\overrightarrow{a}=(a_1,\ldots,a_n,a_{n+1})$, where $1\leq c\leq m$ and $1\leq j\leq t$.
Let $i=c+(j-1)m$ $(1\leq i\leq n)$. Then by [\ref{FLLL}], we have $\prod\limits_{z\neq i,\,\,z=1}^{n+1}\left(a_i-a_z\right)\in QR_q$, for any
$1\leq i\leq n+1$, except the case that $t$ is even, $m$ is even and $r\equiv1\,(\mathrm{mod}\,4)$. Accordingly, for any
$1\leq i\leq n+1$, we can set $v_i^2=\prod\limits_{z\neq i,\,\,z=1}^n\left(a_i-a_z\right)^{-1}$ and
$\overrightarrow{v}=(v_1,\ldots,v_n,v_{n+1})$. It follows that $\mathbf{GRS}_{\frac{n-1}{2}}(\overrightarrow{a},\overrightarrow{v})$ is MDS almost
self-dual. Due to Theorem \ref{via GRS}, the result can be deduced.

(iv). With the same process of proof as (iii), we let $\overrightarrow{a}=(a_1,\ldots,a_n,a_{n+1})$ and $\overrightarrow{v}=(v_1,\ldots,v_n,v_{n+1})$,
where $v_i^2=-\prod\limits_{z\neq i,\,\,z=1}^n\left(a_i-a_z\right)^{-1}$. Since
$\mathbf{GRS}_{\frac{n+1}{2}}(\overrightarrow{a},\overrightarrow{v},\infty)$ is MDS self-dual and by Theorem \ref{via eGRS}, we obtain the result.
\end{proof}

\begin{remark}
In (ii) and (iii), the length of the code is $n+1$. However, they can not cover each other.
\end{remark}

\begin{example}\label{odd}
Let $q=r^2$, where $r$ is an odd prime power. Suppose $m\mid q-1$ and $1\leq t\leq \frac{r+1}{2\gcd(r+1,m)}$. Assume $n=tm$ is odd.

(i). For any $1\leq k \leq \frac{n-1}{2}$ and $0\leq l\leq k$, there exists a $q$-ary $[n,k]$ MDS code $C$ with $\dim(Hull(C))=l$.

(ii). For any $1\leq k \leq \frac{n+1}{2}$ and $0\leq l\leq k$, there exists a $q$-ary $[n+1,k]$ MDS code $C$ with $\dim(Hull(C))=l$.

(iii). For any $1\leq k \leq \frac{n+1}{2}$ and $0\leq l\leq k-1$, there exists a $q$-ary $[n+2,k]$ MDS code $C$ with $\dim(Hull(C))=l$.
\end{example}
\begin{proof}
(i). Recall $\alpha$ and $\beta$ in the proof of Example \ref{even}. Let $B=\{\beta^{\mu_1},\ldots,\beta^{\mu_t}\}$ be a set of coset representatives of
$(S\times\langle\alpha\rangle)\big/\langle\alpha\rangle$ with $0\leq\mu_1<\cdots<\mu_t<r+1$ and $\mu_1,\ldots,\mu_t$ are even. Denote by
$\mu=\mu_1+\cdots+\mu_t$ and
$A=\{\alpha\beta^{\mu_1},\ldots,\alpha^m\beta^{\mu_1},\alpha\beta^{\mu_2},\ldots,\alpha^m\beta^{\mu_2},\ldots,\alpha\beta^{\mu_t},
\ldots,\alpha^m\beta^{\mu_t}\}$. Put $a_{c+(j-1)m}:=\alpha^c\beta^{\mu_j}$ with $1\leq c\leq m$, $1\leq j\leq t$ and
$\overrightarrow{a}=(a_1,\ldots,a_n)$. Let $i=c+(j-1)m$ with $1\leq i\leq n$. Then by [\ref{FLLL}], we derive that
\begin{center}
$\prod\limits_{z\neq i,\,\,z=1}^n\left(a_i-a_z\right)\in QR_q$.
\end{center}
Let $v_i^2=\prod\limits_{z\neq i,\,\,z=1}^n\left(a_i-a_z\right)^{-1}$ and $\overrightarrow{v}=(v_1,\ldots,v_n)$. It yields
$\mathbf{GRS}_{\frac{n-1}{2}}(\overrightarrow{a},\overrightarrow{v})$ is MDS almost self-dual. By Theorem \ref{via GRS}, we finish the proof.

(ii). With the same process as (i), let $\overrightarrow{a}=(a_1,\ldots,a_n)$ and we obtain $\prod\limits_{z\neq i,\,\,z=1}^n\left(a_i-a_z\right)\in QR_q$
by [\ref{FLLL}]. Hence there exists $v_i\in \mathbb{F}_q^*$ so that $v_i^2=-\prod\limits_{z\neq i,\,\,z=1}^n\left(a_i-a_z\right)^{-1}$. It is easy to see that
$\mathbf{GRS}_{\frac{n+1}{2}}(\overrightarrow{a},\overrightarrow{v},\infty)$ is MDS self-dual. Then the result follows from Theorem \ref{via eGRS}.

(iii). Choose $A=\{\alpha\beta^{\mu_1},\ldots,\alpha^m\beta^{\mu_1},\alpha\beta^{\mu_2},\ldots,\alpha^m\beta^{\mu_2},\ldots,\alpha\beta^{\mu_t},
\ldots,\alpha^m\beta^{\mu_t},0\}$. Denote by
\begin{center}
$a_{c+(j-1)m}:=\alpha^c\beta^{\mu_j}$, $a_{n+1}:=0$ and $\overrightarrow{a}=(a_1,\ldots,a_n,a_{n+1})$,
\end{center}
where $1\leq c\leq m$ and $1\leq j\leq t$.
Let $i=c+(j-1)m$ $(1\leq i\leq n)$. Then by [\ref{FLLL}], we deduce that $\prod\limits_{z\neq i,\,\,z=1}^{n+1}\left(a_i-a_z\right)\in QR_q$, for any
$1\leq i\leq n+1$. Thus we let $v_i^2=-\prod\limits_{z\neq i,\,\,z=1}^{n+1}\left(a_i-a_z\right)^{-1}$ ($1\leq i\leq n+1$) and
$\overrightarrow{v}=(v_1,\ldots,v_n,v_{n+1})$. Then the result follows from that $\mathbf{GRS}_{\frac{n}{2}}(\overrightarrow{a},\overrightarrow{v})$
is MDS self-dual and Theorem \ref{via eGRS}.
\end{proof}

\begin{example}\label{additive}
Let $q=p^{2s}$, where $p$ is an odd prime and $s$ is a positive integer. Assume that $n=p^{2e}$ with $1\leq e\leq s$.

(i). For any $1\leq k \leq \frac{n-1}{2}$ and $0\leq l\leq k$, there exists a $q$-ary $[n,k]$ MDS code $C$ with $\dim(Hull(C))=l$.

(ii). For any $1\leq k \leq \frac{n+1}{2}$ and $0\leq l\leq k$, there exists a $q$-ary $[n+1,k]$ MDS code $C$ with $\dim(Hull(C))=l$.
\end{example}
\begin{proof}
(i). Denote by $r=p^{s}$. Let $S=\{\alpha_1,\alpha_2,\ldots,\alpha_{p^e}\}$ be an $e$-dimensional $\F_p$-linear subspace of $\F_r$ with $1\leq e\leq s$.
Choose $\beta\in\F_q\backslash\F_r$ such that $\beta^{r+1}=1$. Let $\alpha_{k,j}=\alpha_k\beta+\alpha_j$ with $1\leq k,j\leq p^e$. Denote by
$a_{k+(j-1)p^e}:=\alpha_{k,j}$ and $\overrightarrow{a}=(a_1,\ldots,a_n)$. Let $i=k_0+(j_0-1)\cdot p^e$ with $1\leq i\leq n$. Then by [\ref{FLLL}],
it follows that $\prod\limits_{z\neq i,\,\,z=1}^{n}\left(a_i-a_z\right)\in QR_q$. For any $1\leq i\leq n$, set
$v_i^2=\prod\limits_{z\neq i,\,\,z=1}^{n}\left(a_i-a_z\right)^{-1}$ and $\overrightarrow{v}=(v_1,\ldots,v_n)$. It is easy to see that
$\mathbf{GRS}_{\frac{n-1}{2}}(\overrightarrow{a},\overrightarrow{v})$ is MDS almost self-dual. According to Theorem \ref{via GRS},
we accomplish the proof.

(ii). With the same reason as (i), put $\overrightarrow{a}=(a_1,\ldots,a_n)$. We obtain $\prod\limits_{z\neq i,\,\,z=1}^{n}\left(a_i-a_z\right)\in QR_q$
with $1\leq i\leq n$. Let $v_i^2=-\prod\limits_{z\neq i,\,\,z=1}^{n}\left(a_i-a_z\right)^{-1}$ and denote by $\overrightarrow{v}=(v_1,\ldots,v_n)$. We
deduce that $\mathbf{GRS}_{\frac{n+1}{2}}(\overrightarrow{a},\overrightarrow{v},\infty)$ is MDS self-dual. According to Theorem \ref{via eGRS},
the result can be obtained.
\end{proof}

From Theorem 6.1 in [\ref{ZF}], when $q\equiv3\,(\mathrm{mod}\,4)$ and $n\equiv2\,(\mathrm{mod}\,4)$, there does not exist self-dual code over $\F_q$ with
length $n$. However, self-orthogonal codes with $q\equiv3\,(\mathrm{mod}\,4)$ and $n\equiv2\,(\mathrm{mod}\,4)$ may exist. So we can construct MDS codes
with Euclidean hulls of assigned dimensions with $q\equiv3\,(\mathrm{mod}\,4)$ and $n\equiv2\,(\mathrm{mod}\,4)$ by Theorem \ref{via GRS} in the following example.

\begin{example}
Let $q\equiv3\,(\mathrm{mod}\,4)$ be an odd prime power. Suppose odd $t\mid q-1$ and $n=2t$. For any $1\leq k\leq \frac{n}{2}-1$ and
$0\leq l\leq k$, there exists a $q$-ary $[n,k]$ MDS code $C$ with $\dim(Hull(C))=l$.
\end{example}
\begin{proof}
Let $\alpha$ be a primitive $t$-th root of unity in $\F_q$. For any $\omega\not\in QR_q$, set
$$\overrightarrow{a}=\left(\alpha,\alpha^2,\ldots,\alpha^t,\omega\alpha,\omega\alpha^2,\ldots,\omega\alpha^t\right).$$
 When $1\leq i\leq t$,
\begin{equation*}
\begin{aligned}
u_i=\prod\limits_{j=1,j\neq i}^t(\alpha^i-\alpha^j)\cdot\prod\limits_{j=1}^t(\alpha^i-\omega\alpha^j)=t\alpha^{-i}\cdot(1-\omega^t)
\end{aligned}
\end{equation*}
and
\begin{equation*}
\begin{aligned}
u_{i+t}=\prod\limits_{j=1,j\neq i}^t(\omega\alpha^i-\omega\alpha^j)\cdot\prod\limits_{j=1}^t(\omega\alpha^i-\alpha^j)=(-\omega^{t-1})\cdot t\alpha^{-i}\cdot(1-\omega^t).
\end{aligned}
\end{equation*}
Choose $\lambda(x)=t(1-\omega^t)x$.  For $1\leq i\leq t$,
$$\lambda(\alpha^i)u_i=(t\cdot(1-\omega^t))^2\in QR_q$$
and
$$\lambda(\omega\alpha^{i})u_{i+t}=(-\omega^t)\cdot(t\cdot(1-\omega^t))^2\in QR_q,$$
which follows from $q\equiv3\,(\mathrm{mod}\,4)$ and $t$ odd. By Lemma \ref{GRS}, there exists $\overrightarrow{v}\in \mathbb{F}_q^n$ with nonzero entries such that
$\mathbf{GRS}_k(\overrightarrow{a},\overrightarrow{v})$ is self-orthogonal. According to Theorem \ref{via GRS}, we complete the proof.
\end{proof}

\section{Conclusion}

\quad\;Based on [\ref{FFLZ}], [\ref{LC}] and [\ref{LCC}], we propose a mechanism on the constructions of MDS codes with arbitrary dimensions of Euclidean
hulls: if there exist self-orthogonal (extended) GRS codes, then we can construct (extended) GRS codes with arbitrary assigned dimensions of Euclidean hulls.
 In particular, MDS (almost) self-dual codes can be employed to construct such codes. In this sense, any known (extended) GRS (almost) self-dual code can be applied to
 find new (extended) GRS code  with any dimension of hull. A more general question remains open: for an $[n,m]$ MDS code $C$ with $\dim\left(Hull(C)\right)=h$, try to find $[n,k]$ MDS code $C'$ with  any $k\leq m$ and any $\dim\left(Hull(C')\right)=l\leq \min{(h,k)}$. We invite readers to attack this open problem.

\section*{Acknowledgements}

The authors thank anonymous reviewers and editor for their suggestions and comments to improve the readability of this paper. This research is supported by National Natural Science Foundation of China under Grant 11471008, Grant 11871025
and the self-determined research funds of CCNU from the colleges' basic research and operation of MOE(Grant No. CCNU18TS028).

\end{document}